\pdfoutput=1
\let\TeXyear\year
\documentclass{ieeeaccess}

\let\year\TeXyear
\usepackage{tcolorbox}
\NewSpotColorSpace{PANTONE}
\AddSpotColor{PANTONE} {PANTONE3015C} {PANTONE\SpotSpace 3015\SpotSpace C} {1 0.3 0 0.2}
\SetPageColorSpace{PANTONE}
\definecolor{accessblue}{cmyk}{1, 0.3, 0, 0.2}
\definecolor{greycolor}{cmyk}{0,0,0,.8}
\usepackage{cite}
\usepackage{amsmath,amssymb,amsfonts}
\usepackage{algorithmic}
\usepackage{graphicx}
\usepackage{textcomp}
\usepackage[dvipsnames]{xcolor}  
\usepackage{tikz}
\usetikzlibrary{arrows.meta, positioning, fit, backgrounds}
\usepackage{soul}
\usepackage{mdsymbol}
\usepackage[compatibility=false]{caption}

\usepackage{url}
\usepackage{float}

\usepackage[utf8]{inputenc}
\usepackage{booktabs}
\usepackage{subcaption}

\usepackage{amsthm}
\usepackage[linesnumbered,ruled,vlined]{algorithm2e}
\usetikzlibrary{calc, arrows.meta, positioning}
\tikzstyle{blackdot} = [circle, fill=black, minimum size=2.0mm, inner sep=0pt]
\tikzstyle{whitedot} = [circle, draw=black, fill=black, minimum size=2.0mm, inner sep=0pt]


\tikzset{
  solidEdge/.style={thick, solid},
  dottedEdge/.style={thick, dashed}
}

\def\BibTeX{{\rm B\kern-.05em{\sc i\kern-.025em b}%
\kern-.08em T\kern-.1667em\lower.7ex\hbox{E}\kern-.125emX}}

\usepackage{pgfplots}
\pgfplotsset{compat=1.17} 
\usepackage[compatibility=false]{caption}      
\usepackage{graphicx}
\usepackage{siunitx} 

\newtheorem{definition}{Definition}

\newtheorem{lemma}{Lemma}
\theoremstyle{definition}
\newtheorem*{def*}{Definition}


\newcommand{\bbtwok}{\textsc{BB2K}\xspace}
\newcommand{\balancebutterfly}{\mathbin{\rotatebox[origin=c]{90}{$\hourglass$}^\star}}

\begin{document}

\history{Date of publication xxxx 00, 0000, date of current version xxxx 00, 0000.}
\doi{10.1109/ACCESS.2017.DOI}

\title{Multi-core \& GPU-based Balanced Butterfly Counting in Signed Bipartite Graphs}

\author{
\uppercase{Kiran Mekala}\authorrefmark{1}, 
\uppercase{Apurba Das}\authorrefmark{2},
\uppercase{Suman Banerjee}\authorrefmark{3},
\uppercase{Tathagata Ray}\authorrefmark{4}
}

\address[1]{Computer Science and Information Systems, BITS Pilani Hyderabad Campus, Hyderabad, India (e-mail: p20220017@hyderabad.bits-pilani.ac.in)}
\address[2]{Computer Science and Information Systems, BITS Pilani Hyderabad Campus, Hyderabad, India (e-mail: apurba@hyderabad.bits-pilani.ac.in)}
\address[3]{Dept. of Computer Science and Engineering, Indian Institute of Technology Jammu, Jammu \& Kashmir, India (e-mail: suman.banerjee@iitjammu.ac.in)}
\address[4]{Computer Science and Information Systems, BITS Pilani Hyderabad Campus, Hyderabad, India (e-mail: rayt@hyderabad.bits-pilani.ac.in)}

\corresp{Corresponding author: Apurba Das (e-mail: apurba@hyderabad.bits-pilani.ac.in).}

\markboth
{Kiran Mekala \headeretal: Multi-core \& GPU-based Balanced Butterfly Counting}
{Kiran Mekala \headeretal: Multi-core \& GPU-based Balanced Butterfly Counting}

\begin{abstract}

Balanced butterfly counting, corresponding to counting balanced $(2,2)$-bicliques, is a fundamental primitive in the analysis of signed bipartite graphs and provides a basis for studying higher-order structural properties such as clustering coefficients and community structure. Although prior work has proposed an efficient CPU-based serial method for counting balanced $(2,k)$-bicliques. The computational cost of balanced butterfly counting remains a major bottleneck on large-scale graphs. In this work, we present the highly parallel implementations for balanced butterfly counting for both multicore CPUs and GPUs. The proposed multi-core algorithm (M-BBC) employs fine-grained vertex-level parallelism to accelerate wedge-based counting while eliminating the generation of unbalanced substructures. To improve scalability, we develop a GPU-based method (G-BBC) that uses a tile-based parallel approach to effectively leverage shared memory while handling large vertex sets. We then present an improved variation, G-BBC++, which integrates dynamic scheduling to mitigate workload imbalance and maximize throughput. We conduct an experimental assessment of the proposed methods across 15 real-world datasets. Experimental results exhibit that M-BBC achieves speedups of up to $71.13\times$ (average $38.13\times$) over the sequential baseline \bbtwok. The GPU-based algorithms deliver even greater improvements, achieving up to \textbf{13,320$\times$ speedup} (average $2,600\times$) over \bbtwok and outperforming M-BBC by up to $186\times$ (average $50\times$). These results indicate the substantial scalability and efficiency of our parallel algorithms and establish a robust foundation for high-performance signed motif analysis on massive bipartite graphs.
\end{abstract}


\begin{keywords}
Biclique, Bipartite graph, GPUs, High-performance computing, Motif, Multi-core, Signed bipartite graph, Wedge.
\end{keywords}

\titlepgskip=-15pt

\maketitle

\section{Introduction}
\label{sec:introduction}

    

\PARstart{B}{ipartite} graphs have emerged as a fundamental modeling framework for representing interactions among two distinct types of entities across diverse domains, including user-product relations in e-commerce, collaborations between actors and movies, and connections between authors and reviews in academic peer review systems. Formally, a bipartite graph is represented as $G=(U, V, E)$, in which $U$ and $V$ are disjoint vertex sets; each edge $e= (u,v) \in E$ connects a vertex $u \in U$ with a vertex $v \in V$ (as illustrated in Fig.~\ref{fig:gray_vs_signed}(a)). To ground this in a real-world example, consider an e-commerce bipartite network where customers represent one set of nodes and products represent the other. An edge is established over a customer and a product when the customer has interacted with or acquired the product. The study of structural patterns in bipartite graphs has received considerable attention, with particular focus on cohesive substructures such as bicliques~\cite{yao2022identifying,chen2022efficient,lyu2020maximum}, bi-trusses~\cite{zou2016bitruss,chen2021higher}, bi-cores~\cite{luo2023efficient}, and related dense subgraph models. Among these cohesive structures, the rectangle and the butterfly (a $(2,2)$ complete bipartite subgraph, also called a 4-cycle) are the simplest non-trivial motifs and play a key role in the analysis of more complex subgraph patterns~\cite{wang2014rectangle,aksoy2017measuring}. The butterfly is particularly useful for capturing local connectivity and higher-order interactions; consequently, butterfly counting has emerged as a core computational task in the structural analysis of bipartite networks, supporting various downstream tasks such as measuring cohesion, identifying dense regions, and enabling efficient subgraph enumeration~\cite{wang2019vertex,sanei2018butterfly,wang2022accelerated,sanei2019fleet,shi2020parallel,xu2022efficient}.

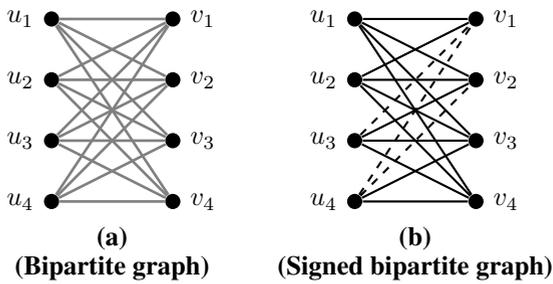
\begin{figure}[t]
    \centering
    \begin{tikzpicture}[scale=0.8]
        \tikzstyle{blackdot} = [circle, fill=black, minimum size=0.2cm, inner sep=0pt]

        \node[blackdot, label=left:$u_1$] (u0) at (0,1.4) {};
        \node[blackdot, label=left:$u_2$] (u1) at (0,0.4) {};
        \node[blackdot, label=left:$u_3$] (u2) at (0,-0.6) {};
        \node[blackdot, label=left:$u_4$] (u3) at (0,-1.6) {};

        \node[blackdot, label=right:$v_1$] (v0) at (2,1.4) {};
        \node[blackdot, label=right:$v_2$] (v1) at (2,0.4) {};
        \node[blackdot, label=right:$v_3$] (v2) at (2,-0.6) {};
        \node[blackdot, label=right:$v_4$] (v3) at (2,-1.6) {};

        \foreach \i in {0,1,2,3} {
            \foreach \j in {0,1,2,3} {
                \draw[line width=1pt, color=gray] (u\i) -- (v\j);
            }
        }

        \node[blackdot, label=left:$u_1$] (bu0) at (5,1.4) {};
        \node[blackdot, label=left:$u_2$] (bu1) at (5,0.4) {};
        \node[blackdot, label=left:$u_3$] (bu2) at (5,-0.6) {};
        \node[blackdot, label=left:$u_4$] (bu3) at (5,-1.6) {};

        \node[blackdot, label=right:$v_1$] (bv0) at (7,1.4) {};
        \node[blackdot, label=right:$v_2$] (bv1) at (7,0.4) {};
        \node[blackdot, label=right:$v_3$] (bv2) at (7,-0.6) {};
        \node[blackdot, label=right:$v_4$] (bv3) at (7,-1.6) {};

        \draw[thick] (bu0) -- (bv0);
        \draw[thick] (bu0) -- (bv1);
        \draw[thick] (bu0) -- (bv2);
        \draw[thick] (bu0) -- (bv3);

        \draw[thick] (bu1) -- (bv0);
        \draw[thick] (bu1) -- (bv1);
        \draw[thick] (bu1) -- (bv2);
        \draw[thick] (bu1) -- (bv3);

        \draw[thick, dashed] (bu2) -- (bv0);
        \draw[thick] (bu2) -- (bv1);
        \draw[thick] (bu2) -- (bv2);
        \draw[thick] (bu2) -- (bv3);

        \draw[thick, dashed] (bu3) -- (bv0);
        \draw[thick, dashed] (bu3) -- (bv1);
        \draw[thick] (bu3) -- (bv2);
        \draw[thick] (bu3) -- (bv3);

        \node at (1, -2.2) {\textbf{(a)}};
        \node at (1, -2.7) {\textbf{(Bipartite graph)}};
        \node at (6, -2.2) {\textbf{(b)}};
         \node at (6, -2.7) {\textbf{(Signed bipartite graph)}};
    \end{tikzpicture}
    \caption{Example of (a) a bipartite graph (gray lines denote edges) and (b) a signed bipartite graph (solid/dashed edges denote positive/negative edges, respectively).}
    \label{fig:gray_vs_signed}
\end{figure}

In many real-world scenarios, the interactions between entities from two different sets are not simply binary existence but also carry sentiment or polarity. While standard bipartite graphs are effective for modeling unsigned pairwise interactions, they are limited in expressiveness, as they assume all edges are unweighted and unsigned. In contrast, real-world relationships often involve both positive and negative feedback, as well as preferences and trust or distrust of information, which cannot be captured by an unsigned model alone. For instance, in an e-commerce platform, a customer may express either a favourable or an unfavourable opinion about a product. To account for richer interaction semantics, the bipartite model is extended to incorporate signed edges, yielding the notion of a signed bipartite graph (as illustrated in Fig.~\ref{fig:gray_vs_signed}(b)). Formally, a signed bipartite graph is defined as $G=(U,V,E=E^+ \cup E^-)$, where $U$ and $V$ are disjoint node sets and $E \subseteq U \times V$ is partitioned into positive edges $E^+$ and negative edges $E^-$. Each edge $e \in E$ carries a sign, with $\mathrm{sign}(e)=+$ for $e \in E^+$ and $\mathrm{sign}(e)=-$ for $e \in E^-$. Neglecting sign information may result in a misleading structural interpretation of the graph, especially when positive and negative interactions carry distinct semantic meanings. Inspired by this need for a more expressive representation, the traditional butterfly counting problem has been broadened to signed bipartite graphs, which eventually led to the notion of a \textbf{balanced butterfly}~\cite{derr2019balance}. A butterfly is known as \textit{balanced} if it contains an even number of negative edges. This signed extension introduces additional computational and structural demands compared to the unsigned case, requiring the development of specific algorithms to efficiently count balanced butterflies in signed bipartite graphs.

Identifying cohesive substructures in signed bipartite graphs is essential for revealing consistent patterns of positive and negative interactions and for facilitating a deeper understanding of network polarity and structural stability. A notable example of this structure is the \textit{balanced signed biclique} (Definition~\ref{def: bsb}), which extends the concept of balanced butterflies to more extensive and denser subgraphs. While most existing work on biclique mining has focused on unsigned bipartite graphs, Chen et al.~\cite{chen2020efficient} were among the first to examine this problem in signed networks. By leveraging principles from balance theory~\cite{heider1946attitudes}, they introduced the notion of a balanced signed biclique, defined as a biclique that includes no unbalanced butterflies, and developed algorithms to enumerate all maximal balanced bicliques that satisfy a given size threshold. Beyond balanced bicliques, several other cohesive patterns have been explored in signed bipartite graphs, including signed bicliques~\cite{sun2023efficient,sun2022maximal,wang2024efficient} and signed bitrusses \cite{chung2023maximum}. However, enumerating maximal balanced bicliques is computationally expensive due to the strict constraints imposed by structural balance. In contrast, balanced butterflies, despite being smaller substructures, are sufficiently expressive to capture local balance properties while remaining computationally simpler. Thus, balanced butterfly counting has emerged as a fundamental building block for analyzing local structure, stability, and polarity in signed networks, serving as an efficient and interpretable unit for capturing local balance, detecting antagonistic interactions, and uncovering stable connectivity patterns in signed bipartite graphs~\cite{karimi2019multi,derr2018congressional,su2009survey}.


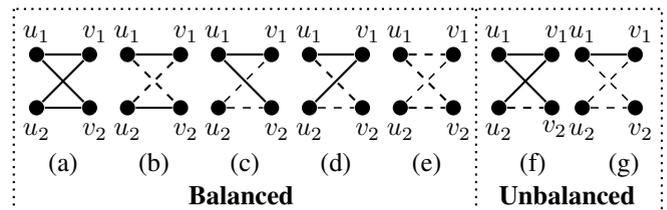
\begin{figure}[ht]
    \centering
    \begin{tikzpicture}[line width=0.2mm]
        \tikzset{
            blackdot/.style={circle, fill=black, minimum size=0.2cm, inner sep=0pt},
        }

         \draw[dotted, thick] (-0.3, 1.3) rectangle (8.25, -1.4);

        \node[blackdot, label={[yshift=1pt]below:$u_2$}] (u0) at (0, 0) {};
        \node[blackdot, label={[yshift=-2pt]above:$u_1$}] (u1) at (0, 0.7) {};
        \node[blackdot, label={[yshift=1pt, xshift=2pt]below:$v_2$}] (v0) at (0.7, 0) {};
        \node[blackdot, label={[yshift=-2pt, xshift=2pt]above:$v_1$}] (v1) at (0.7, 0.7) {};
        \draw[thick] (u0) -- (v0);
        \draw[thick] (u0) -- (v1);
        \draw[thick] (u1) -- (v0);
        \draw[thick] (u1) -- (v1);
        \node at (0.35, -0.75) {(a)};

        \node[blackdot, label={[yshift=1pt]below:$u_2$}] (u2) at (1.2, 0) {};
        \node[blackdot, label={[yshift=-2pt]above:$u_1$}] (u3) at (1.2, 0.7) {};
        \node[blackdot, label={[yshift=1pt, xshift=2pt]below:$v_2$}] (v2) at (1.9, 0) {};
        \node[blackdot, label={[yshift=-2pt, xshift=2pt]above:$v_1$}] (v3) at (1.9, 0.7) {};
        \draw[thick] (u2) -- (v2);
        \draw[dashed, thick] (u2) -- (v3);
        \draw[dashed, thick] (u3) -- (v2);
        \draw[thick] (u3) -- (v3);
        \node at (1.55, -0.75) {(b)};

        \node[blackdot, label={[yshift=1pt]below:$u_2$}] (u4) at (2.4, 0) {};
        \node[blackdot, label={[yshift=-2pt]above:$u_1$}] (u5) at (2.4, 0.7) {};
        \node[blackdot, label={[yshift=1pt, xshift=2pt]below:$v_2$}] (v4) at (3.1, 0) {};
        \node[blackdot, label={[yshift=-2pt, xshift=2pt]above:$v_1$}] (v5) at (3.1, 0.7) {};
        \draw[dashed] (u4) -- (v4);
        \draw[dashed] (u4) -- (v5);
        \draw[thick] (u5) -- (v4);
        \draw[thick] (u5) -- (v5);
        \node at (2.75, -0.75) {(c)};

        \node[blackdot, label={[yshift=1pt]below:$u_2$}] (u6) at (3.6, 0) {};
        \node[blackdot, label={[yshift=-2pt]above:$u_1$}] (u7) at (3.6, 0.7) {};
        \node[blackdot, label={[yshift=1pt, xshift=2pt]below:$v_2$}] (v6) at (4.3, 0) {};
        \node[blackdot, label={[yshift=-2pt, xshift=2pt]above:$v_1$}] (v7) at (4.3, 0.7) {};
        \draw[dashed] (u6) -- (v6);
        \draw[thick] (u6) -- (v7);
        \draw[dashed, thick] (u7) -- (v6);
        \draw[thick] (u7) -- (v7);
        \node at (3.95, -0.75) {(d)};

        \node[blackdot, label={[yshift=1pt]below:$u_2$}] (u8) at (4.8, 0) {};
        \node[blackdot, label={[yshift=-2pt]above:$u_1$}] (u9) at (4.8, 0.7) {};
        \node[blackdot, label={[yshift=1pt, xshift=2pt]below:$v_2$}] (v8) at (5.5, 0) {};
        \node[blackdot, label={[yshift=-2pt, xshift=2pt]above:$v_1$}] (v9) at (5.5, 0.7) {};
        \draw[dashed, thick] (u8) -- (v8);
        \draw[dashed] (u8) -- (v9);
        \draw[dashed, thick] (u9) -- (v8);
        \draw[dashed, thick] (u9) -- (v9);
        \node at (5.15, -0.75) {(e)};

        \node[blackdot, label={[yshift=1pt]below:$u_2$}] (u10) at (6.1, 0) {};
        \node[blackdot, label={[yshift=-2pt]above:$u_1$}] (u11) at (6.1, 0.7) {};
        \node[blackdot, label={[yshift=2pt, xshift=1pt]below:$v_2$}] (v10) at (6.8, 0) {};
        \node[blackdot, label={[yshift=-2pt, xshift=2pt]above:$v_1$}] (v11) at (6.8, 0.7) {};
        \draw[dashed, thick] (u10) -- (v10);
        \draw[thick] (u10) -- (v11);
        \draw[thick] (u11) -- (v10);
        \draw[thick] (u11) -- (v11);
        \node at (6.55, -0.75) {(f)};

        \node[blackdot, label={[yshift=1pt]below:$u_2$}] (u12) at (7.2, 0) {};
        \node[blackdot, label={[yshift=-2pt]above:$u_1$}] (u13) at (7.2, 0.7) {};
        \node[blackdot, label={[yshift=1pt, xshift=2pt]below:$v_2$}] (v12) at (7.9, 0) {};
        \node[blackdot, label={[yshift=-2pt, xshift=2pt]above:$v_1$}] (v13) at (7.9, 0.7) {};
        \draw[dashed] (u12) -- (v12);
        \draw[dashed] (u12) -- (v13);
        \draw[dashed] (u13) -- (v12);
        \draw[thick] (u13) -- (v13);
        \node at (7.75, -0.75) {(g)};

        \node at (2.7, -1.15) {\textbf{Balanced}};
        \node at (7.0, -1.15) {\textbf{Unbalanced}};

\draw[dotted, thick] (5.8, 1.3) -- (5.8, -1.4);

    \end{tikzpicture}
    \caption{An illustration of balanced and unbalanced butterflies.}
    \label{figure:Fig_Two}
\end{figure}

With the growing expansion of signed bipartite graphs in many real-world applications, there is a significant demand for scalable methods that effectively count balanced butterflies. Although numerous parallel techniques have been proposed for butterfly counting on multi-core CPUs~\cite{sanei2018butterfly, shi2022parallel, wang2019vertex, wang2024parallelization}, such methods are specifically developed for unsigned bipartite graphs. In contrast, balanced butterfly counting demands evaluating the sign pattern of each butterfly to determine whether it meets the balance specifications (as shown in Fig.~\ref{figure:Fig_Two} for an example of balanced and unbalanced butterflies). This requirement imposes extra overhead during wedge enumeration and butterfly aggregation. Furthermore, a large fraction of enumerated candidate butterflies are ultimately unbalanced, leading to unnecessary computation and increased runtime. 

In our prior work~\cite{kiran2024efficient}, we tackled this challenge by proposing a bucket-based algorithm that arranges wedges by sign pattern, thereby avoiding unbalanced possibilities early. Although this sequential technique reduces unnecessary work, it does not fully utilise the parallel processing capabilities of modern hardware. The rapid growth in the magnitude and density of signed bipartite graphs requires greater acceleration on multi-core CPUs and GPUs. On the other hand, GPUs offer substantial opportunities to accelerate balanced butterfly counting through their extensive parallelism. Most recently, Xia et al. \cite{xia2024gpu} developed a high-performance GPU algorithm for butterfly counting that effectively addresses issues such as load imbalance, irregular memory access, and synchronisation overhead. However, their method is limited to unsigned bipartite graphs and does not account for the additional sign-specific constraints required for balanced butterfly counting. To the best of our knowledge, 
no prior GPU-based technique simultaneously addresses both high-performance butterfly counting and structural balance enforcement. The proposed work fills this gap by introducing an optimized GPU implementation and a multi-core CPU parallel algorithm, thereby delivering substantial speedups over serial baselines on large signed bipartite graphs.


\subsection{Contributions}
Given the growing prevalence of signed bipartite graphs in real-world applications and the demand for scalable analysis, we develop both multi-core and GPU-based algorithms for balanced butterfly counting. To the best of our knowledge, this is the first study to use both multi-core and GPU architectures for this problem, building a solid basis for scalable signed motif analysis. Our key contributions can be summarized as follows:

\begin{itemize}

\item We employ and expand the BCList++ algorithm, which was originally developed for counting \((p, q)\)-bicliques in unsigned bipartite graphs, to the signed setting. The resulting algorithm, \textbf{SBCList++}, enables efficient counting of balanced \((2, 2)\)-bicliques in signed bipartite graphs.

\item We extend our prior CPU-based algorithm \textbf{\bbtwok} by designing and implementing a vertex-level parallel algorithm, \textbf{M-BBC}, that accelerates wedge-based butterfly counting while avoiding the creation of unbalanced substructures.

\item We introduce \textbf{G-BBC}, a GPU-based algorithm that uses a tile-based, shared-memory processing strategy to efficiently handle large vertex sets.

\item We present \textbf{G-BBC++}, an enhanced GPU implementation that incorporates dynamic workload scheduling to reduce thread-block imbalance and maximize overall performance.

\item We conduct an experimental evaluation of the proposed algorithms on 15 real-world bipartite graphs, demonstrating substantial performance improvements throughout all datasets.

NOTE:
\textit{The implementations of all proposed algorithms will be released as open-source\footnote{\textit{Upon acceptance of the paper}}.}

\end{itemize}

\subsection{Organization of the Paper}

The remainder of this paper has been structured as follows. 
Section~\ref{Sec:related} reviews the existing work on butterfly counting and signed bipartite graphs. 
Section~\ref{Sec:prob_stmt} introduces the preliminaries and formally defines the problem. 
Section~\ref{Sec:alg} presents the proposed algorithms for balanced butterfly counting. 
Section~\ref{sec:load} describes G-BBC++, an enhanced version of G-BBC that incorporates load balancing and dynamic scheduling to improve workload distribution across GPU threads. 
Section~\ref{Sec:EE} discusses the experimental evaluation of the proposed methods. 
Section~\ref{sec: casestudy} presents a detailed case study. Finally,
Section~\ref{Sec:Con} summarises the paper, and Section~\ref{Sec:fut} highlights potential directions for future research.

\section{Related Work}
\label{Sec:related}
       In this section, we review the most relevant earlier studies on structural motif mining in large-scale networks. 
       
       \subsection{Clique Enumeration in Unipartite Graphs}
       Over the past decades, numerous structural patterns have been explored in graphs, including cliques, clans, clubs, cores, and plexes. Among these, we focus on cliques, as they form the foundation for understanding dense subgraph structures and are closely related to our study. A clique consists of a set of vertices such that each pair is connected by an edge. The enumeration of maximal cliques has been extensively studied. The earliest work on this problem was proposed by Akkoyunlu~\cite{akkoyunlu1973enumeration}. This was followed by the seminal work of Bron and Kerbosch~\cite{bron1973algorithm}, who introduced a recursive backtracking algorithm that remains a cornerstone of clique enumeration research. Subsequently studies have extended maximal clique enumeration to various graph settings, including sparse graphs \cite{manoussakis2019new}, graphs generated from spatial data \cite{zhang2019efficient}, temporal graphs \cite{banerjee2019enumeration},large-scale networks \cite{cheng2011finding}, signed unipartite networks \cite{chen2020efficient, li2018efficient,li2019signed,sun2023clique,chen2022balanced}. In addition, numerous algorithms have been designed under different computing paradigms, such as the MapReduce and distributed framework~\cite{lu2010dmaximalcliques}, GPU architectures~\cite{wei2021accelerating}, parallel computing environments~\cite{chen2016parallelizing}, and shared memory systems~\cite{das2018shared}. However, cliques are inherently unipartite structures, and their properties and enumeration techniques do not directly extend to bipartite graphs, which prohibit intra-partition edges and exhibit fundamentally different combinatorial characteristics.

   \subsection{Butterfly Counting in Bipartite Graphs}     
       \par Recently, the study of cliques has been broadened to the bicliques in bipartite graphs. A considerable amount of prior work has investigated bicliques~\cite{chen2022efficient,abidi2020pivot,chen2024maximal,yin2023fairness}. The smallest biclique of practical interest is the $2 \times 2$ biclique, commonly referred to as a butterfly. Numerous sequential methods have been developed for counting butterflies in unsigned bipartite graphs. Wang et al.~\cite{wang2014rectangle} introduced the first vertex-centric butterfly counting algorithm. Subsequently,  Sanei-Mehri et al.~\cite{sanei2018butterfly} proposed a more efficient approach by selecting vertex partitions that minimize the number of wedges. Zhu et al.~\cite{zhu2018fast} further presented an ordering-based algorithm that counts butterflies by processing vertices according to a predefined order. Parallel and distributed variants of butterfly counting algorithms have also been studied extensively. Wang et al.~\cite{wang2014rectangle} developed an MPI-based distributed algorithm that partitions the graph and processes each partition independently. More recently, Shi et al.~\cite{shi2022parallel} proposed a parallel framework incorporating multiple wedge aggregation strategies to improve scalability. Beyond static graphs, butterfly counting has been explored in more complex bipartite settings, including temporal graphs~\cite{cai2023efficient,papadias2024counting}, uncertain graphs~\cite{zhou2021butterfly}, and streaming environments~\cite{meng2024counting}. Additional efforts have focused on optimizing butterfly counting under different computational paradigms, such as the I/O-efficient algorithms~\cite{shi2022parallel}, GPU acceleration~\cite{xia2024gpu}, memory-hierarchy-aware~\cite{wang2024parallelization}, and distributed systems~\cite{tang2024monarch,weng2022distributed}. However, all these approaches are limited to unsigned bipartite graphs and do not consider edge polarity.

   \subsection{Other Subgraph Structures in Signed Bipartite Graphs}  

Derr et al.~\cite{derr2019balance}conducted the first comprehensive study of balance theory in signed bipartite graphs and formally introduced the notion of balanced butterflies as a fundamental unit for assessing structural stability under positive and negative interactions. Sun et al.~\cite{sun2022maximal} investigated the enumeration of maximal balanced signed bicliques, thereby extending balanced motif mining to larger and denser subgraphs in signed bipartite graphs. Chung et al.~\cite{chung2023maximum} proposed the balanced $(k,\epsilon)$-bitruss model, which jointly captures density and tolerance to imbalance, and presented both hardness results and heuristic algorithms for identifying cohesive subgraphs under this model. Although these models capture rich structural properties, they are computationally expensive to enumerate at scale, particularly under strict balance constraints. In our prior work~\cite{kiran2024efficient}, we introduced a bucket-based algorithm that groups wedges by sign patterns, thereby filtering out unbalanced possibilities early. 
      
      \par In recent years, several parallel algorithms have been developed for butterfly counting on multi-core CPUs~\cite{sanei2018butterfly,shi2022parallel,wang2019vertex,wang2024parallelization}. However, these methods are limited to unsigned bipartite graphs. To further improve scalability, GPUs have been widely used to accelerate graph primitives such as triangle counting~\cite{green2014fast}, BFC~\cite{ueno2013parallel}, and PageRank~\cite{wu2010efficient}. Although several GPU-based graph algorithms have been developed, their techniques are not directly applicable to balanced butterfly counting, as the computation patterns and structural requirements differ significantly. We draw inspiration from the recent GPU-based butterfly-counting algorithm, G-BFC, proposed by Xia et al.~\cite{xia2024gpu}. However, extending G-BFC to signed bipartite graphs is non-trivial, as each candidate butterfly must be explicitly checked for balance, introducing significant computational overhead. In the worst case, a signed bipartite graph may contain no balanced butterflies, causing existing unsigned algorithms to incur substantial wasted computation. In this paper, we address this gap by designing both a thread-parallel CPU algorithm and a GPU-based solution tailored for signed bipartite graphs.

\begin{table}[t]
\centering
\caption{Symbols and meanings.}
\label{tab:symbols}
\begin{tabular}{ll}
\hline
\textbf{Symbol} & \textbf{Meaning} \\
\hline
$G = (U, V, E^+, E^-)$ & A signed bipartite graph with signs \\
$|U \cup V|$ & Total number of vertices in graph  $G$ \\
$E(G)$ & Set of all edges in graph  $G$ (i.e., $E^+ \cup E^-$) \\
$E^+$ & Set of positive edges \\
$E^-$ & Set of negative edges \\
$u, v$ & Vertices in $G$ \\
$\lor(u_i, v_j, u_k)$ & A wedge (2-path) in graph $G$ \\

$\lor^s(u_i, v_j, u_k)$ & A Symmetric wedge (2-path) in  graph $G$ \\

$\lor^a(u_i, v_j, u_k)$ & An asymmetric wedge (2-path) in  graph $G$ \\

$(u_i, u_j, v_i, v_j)$ & A butterfly in $G$ \\
$\Gamma(u)$ & Neighbor set of vertex $u$ \\
$N^2_G(u)$ & 2-hop neighbors of vertex $u$ in $G$ \\
$\deg(u)$ & Degree of vertex $u$ \\
$p(u)$ & Priority of vertex $u$\\

$\balancebutterfly$ & balanced butterfly \\
\hline
\end{tabular}
\end{table}
  
\section{Preliminaries and Problem Definition}
\label{Sec:prob_stmt}


We consider a signed bipartite graph $G = (U, V, E = E^{+}\cup E^{-})$, in which $U$ and $V$ are the bipartitions, and $E \subseteq U \times V$ is the edge set partitioned into positive edges $E^+$ and negative edges $E^-$. We define $\text{sign}(e)= ``+"$ for an edge $e \in E^{+}$ and $\text{sign}(e)= ``-"$ for an edge $e \in E^{-}$. For a vertex $u\in U$, let $d(u)$ refer to the degree of $u$ and $\Gamma(u)$ indicate the neighbors of $u$ in $G$. Now, we define some basic terms for our problem.


\begin{definition}[\textbf{Butterfly}~\cite{wang2019vertex}]
Given a bipartite graph $G=(U,V,E)$, a butterfly is a cycle of length four induced by vertices $(u_i, u_j, v_i, v_j)$, where $u_i, u_j \in U$ and $v_i, v_j \in V$, such that all four possible edges between $\{u_i,u_j\}$ and $\{v_i,v_j\}$ exist in $G$.  
In a signed bipartite graph, a \textbf{signed butterfly} is a butterfly whose edges are assigned signs.
\label{def:butterfly}
\end{definition}


\begin{definition}[\textbf{Balanced Butterfly}~\cite{derr2019balance}]
Given a signed bipartite graph $G$, a signed butterfly $b$ in $G$ is considered \textbf{balanced} if it contains an even number of negative edges.
\label{def:bal_butterfly}
\end{definition}

\begin{definition}[\textbf{Biclique}~\cite{derr2019balance}]
Given a bipartite graph $G = (U, V, E)$, a biclique is a complete bipartite subgraph of $G$. A \textbf{signed biclique} is a biclique in which the edges are positive or negative.
\end{definition}


\begin{definition}[\textbf{Balanced Signed biclique}]
Given a signed bipartite graph $G = (U,V,E)$, a balanced signed biclique is an induced subgraph $B$ of $G$ that satisfies:
\begin{itemize}
    \item \textbf{Cohesiveness constraint:}~$B$ is a signed biclique of $G$;
    \item \textbf{Balance constraint:}~$B$ does not contain any unbalanced butterfly.
\end{itemize}
\label{def: bsb}
\end{definition}

\begin{definition}[\textbf{Balanced $(2,k)$-Biclique}]~A $(2,k)$-biclique is balanced if there are no unbalanced butterflies in it.
\label{bal-2-3-bcl}
\end{definition}


\begin{definition}[\textbf{Vertex Priority}~\cite{wang2019vertex}]
Let $a,b \in \{U \cup V\}$ be two vertices. We say that vertex $a$ has higher priority than vertex $b$, denoted by $p(a) > p(b)$, if one of the following conditions holds:
\begin{enumerate}
    \item $degree(a) > degree(b)$;
   \item $id(a)>id(b)$, if $degree(a)=degree(b)$
\end{enumerate}
Here, $id(a)$ is the vertex ID of $a$.
\label{def-priority}
\end{definition}




\begin{definition}[\textbf{Wedge ($\lor$)}~\cite{wang2019vertex}]
Given a bipartite graph $G=(U,V,E)$, let $u_i,u_k \in U$ and $v_j \in V$. 
A wedge $\lor(u_i,v_j,u_k)$ is a path of length two that starts at $u_i$, passes through $v_j$, and ends at $u_k$, as illustrated in Fig.~\ref{fig:wedge}. 
We assume an ordering on vertices in $U$ such that $p(u_i) > p(u_k)$.
\label{def:wedge}
\end{definition}


\begin{figure}[ht]
\centering
\begin{tikzpicture}
    \tikzstyle{blackdot} = [circle, fill=black, minimum size=2.5mm, inner sep=0pt]

    \node[blackdot, label=left:{$u_i$}] (u1a) at (0, 0.7) {};
    \node[blackdot, label=left:{$u_k$}] (w1a) at (0, -0.7) {};
    \node[blackdot, label=right:{$v_j$}] (v1a) at (1.1, 0) {};

    \draw[gray, line width=0.6pt] (u1a) -- (v1a);
    \draw[gray, line width=0.6pt] (w1a) -- (v1a);

\end{tikzpicture}
\caption{A wedge in an unsigned bipartite graph.}
\label{fig:wedge}
\end{figure}
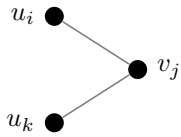


\begin{definition}[\textbf{Symmetric Wedge ($\lor^s$)}~\cite{kiran2024efficient}]
A wedge $\lor(u_i,v_j,u_k)$ in a signed bipartite graph $G=(U,V,E)$ is called a \textbf{symmetric wedge} if the two edges $(u_i,v_j)$ and $(u_k,v_j)$ have the same sign, i.e., both are positive or both are negative. Such a wedge is denoted by $\lor^s(u_i,v_j,u_k)$ and is illustrated in Fig.~\ref{fig:Swedge}.
\label{def:wedgesym}
\end{definition}

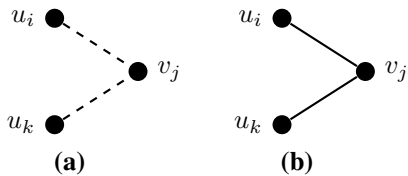
\begin{figure}[ht]
\centering
\begin{tikzpicture}
    \tikzstyle{blackdot} = [circle, fill=black, minimum size=2.5mm, inner sep=0pt]

    \node[blackdot, label=left:{$u_i$}] (u1b) at (3.0, 0.7) {};
    \node[blackdot, label=left:{$u_k$}] (w1b) at (3.0, -0.7) {};
    \node[blackdot, label=right:{$v_j$}] (v1b) at (4.1, 0) {};

    \draw[black, dashed, line width=0.8pt] (u1b) -- (v1b);
    \draw[black, dashed, line width=0.8pt] (w1b) -- (v1b);
     \node at (3.2, -1.2) {\textbf{(a)}};

    \node[blackdot, label=left:{$u_i$}] (u1c) at (6.0, 0.7) {};
    \node[blackdot, label=left:{$u_k$}] (w1c) at (6.0, -0.7) {};
    \node[blackdot, label=right:{$v_j$}] (v1c) at (7.1, 0) {};

    \draw[black, line width=0.8pt] (u1c) -- (v1c);
    \draw[black,  line width=0.8pt] (w1c) -- (v1c);
     \node at (6.2, -1.2) {\textbf{(b)}};

\end{tikzpicture}
\caption{Symmetric wedges (a) and (b) in a signed bipartite graph.}
\label{fig:Swedge}
\end{figure}


\begin{definition}[\textbf{Asymmetric Wedge ($\lor^a$)}~\cite{kiran2024efficient}]
A wedge $\lor(u_i,v_j,u_k)$ in a signed bipartite graph $G=(U,V,E)$ is called an \textbf{asymmetric wedge} if the two edges $(u_i,v_j)$ and $(u_k,v_j)$ have different signs, i.e., one is positive and the other is negative. Such a wedge is denoted by $\lor^a(u_i,v_j,u_k)$ and is illustrated in Fig.~\ref{fig:Awedge}.
\label{def:wedgeasym}
\end{definition}

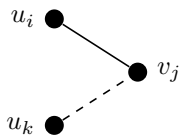
\begin{figure}[ht]
\centering
\begin{tikzpicture}
    \tikzstyle{blackdot} = [circle, fill=black, minimum size=2.5mm, inner sep=0pt]

    \node[blackdot, label=left:{$u_i$}] (u1a) at (0, 0.7) {};
    \node[blackdot, label=left:{$u_k$}] (w1a) at (0, -0.7) {};
    \node[blackdot, label=right:{$v_j$}] (v1a) at (1.1, 0) {};

    \draw[black, line width=0.6pt] (u1a) -- (v1a);
    \draw[black,dashed, line width=0.6pt] (w1a) -- (v1a);

\end{tikzpicture}
\caption{Asymmetric wedge in a signed bipartite graph.}
\label{fig:Awedge}
\end{figure}


\noindent\textbf{Problem Statement.}~
We consider a signed bipartite graph $G=(U,V,E)$. The aim is to count the total number of balanced butterflies in $G$.

\section{proposed algorithms}
\label{Sec:alg}

\begin{algorithm}[t]
\small
\DontPrintSemicolon
	\SetKwInOut{Input}{Input}
	\SetKwInOut{Output}{Output}
	\Input{$G$$(U,V,E)$: a signed bipartite graph.}
	\Output{$b_{2k}$: total number of balanced  $(2,k)$ bicliques.}
        Calculate $p(u)$ for each $u \in \{U \cup V\}$  [\textbf{Definition}~\ref{def-priority}]\\

        For each vertex $u$, sort $N(u)$ according to their priorities.

        $b_{2k}\gets 0$\;

       \ForEach{$u\in U$}{
       $B_1\gets 0$\;
       $B_2\gets 0$\;
        \ForEach{$v \in \Gamma(u)$}{
        \ForEach{$w \in \Gamma(v)$ $|$ $p(w) > p(u)$}{
        \If{$\lor^{s}(u,v,w)$}{
        $B_1[w]++$\;
        }
        \Else{
        $B_2[w]++$\;
        }
        }
        }
        \ForEach{$w\in B_1$}{

            $b_{2k}\gets b_{2k} + \binom{B_1[w]}{k}$\;
 
        }
        \ForEach{$w\in B_2$}{

            $b_{2k}\gets b_{2k} + \binom{B_2[w]}{k}$\;
        }

        }
 return $b_{2k}$
\caption{\bbtwok (where $k = 2$)\cite{kiran2024efficient}}
\label{Algorithm:Bucket}
\end{algorithm}
In this section, we first explain the baseline and reference algorithm relevant to our proposed method.

A straightforward baseline for counting balanced butterflies follows a two-step procedure: first, enumerate all butterflies in the graph, and then verify which of them satisfy the balance condition. To implement this approach, we leverage the 
$(p,q)$-biclique enumeration algorithm \textbf{BCList++}~\cite{yang2023p} to enumerate all $(2,2)$-bicliques, also known as butterflies in the signed bipartite graph. We then extended BCList++ by incorporating an additional balance check for each enumerated butterfly. We refer to this extended baseline as \textbf{SBCList++}, which serves as a reference method for counting balanced $(2,2)$-bicliques in signed bipartite graphs.
Next, we introduce a second reference algorithm,-\textbf{\bbtwok}, upon which our proposed algorithms are built. This method is designed to count balanced $(2,k)$-bicliques in signed bipartite graphs, and its pseudocode is presented in Algorithm~\ref{Algorithm:Bucket}.

\begin{lemma}
Any balanced butterfly in a signed bipartite graph can be formed either using a pair of symmetric wedges or a pair of asymmetric wedges, but not both.
\label{lemma:lem-1}
\end{lemma}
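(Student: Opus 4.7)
The plan is to argue via a simple parity count on the four edges of the butterfly, using the canonical decomposition of a butterfly into its two constituent wedges.

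First, I would fix notation. Take any balanced butterfly on vertices $u_i,u_k\in U$ and $v_j,v_l\in V$, and let $s_{ij},s_{kj},s_{il},s_{kl}\in\{+,-\}$ denote the signs of its four edges. The butterfly decomposes naturally into exactly two wedges that share the endpoint pair $\{u_i,u_k\}$: namely $\lor(u_i,v_j,u_k)$ with edge signs $(s_{ij},s_{kj})$, and $\lor(u_i,v_l,u_k)$ with edge signs $(s_{il},s_{kl})$. By Definitions~\ref{def:wedgesym} and~\ref{def:wedgeasym}, each of these wedges is either symmetric (both edges of the same sign) or asymmetric (opposite signs), and these two cases are mutually exclusive and exhaustive.

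Next, I would count the number of negative edges wedge-by-wedge. A symmetric wedge contributes $0$ or $2$ negative edges, i.e., an even count; an asymmetric wedge contributes exactly $1$ negative edge, i.e., an odd count. Since the four edges of the butterfly partition cleanly into these two wedges, the total number of negative edges in the butterfly equals the sum of the negative-edge counts of its two wedges. By Definition~\ref{def:bal_butterfly}, balance requires this total to be even, so the two wedge-contributions must have the same parity.

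From the parity observation I would conclude: either both wedges are symmetric (even + even) or both wedges are asymmetric (odd + odd); the mixed case yields an odd total and hence an unbalanced butterfly, contradicting balance. The ``but not both'' clause is immediate from the mutual exclusivity of the symmetric/asymmetric classification of a single wedge: a given wedge cannot be simultaneously symmetric and asymmetric, so no butterfly is formed from one symmetric and one asymmetric wedge simultaneously realizing both decompositions. There is no real obstacle here beyond careful bookkeeping of parities; the only subtlety is to make clear that the decomposition into the two $v$-centered wedges sharing $\{u_i,u_k\}$ is the natural one dictated by the vertex-priority orientation used in Algorithm~\ref{Algorithm:Bucket}, so that the wedges referenced in later proofs match those counted here.
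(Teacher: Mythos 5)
Your proof is correct and takes essentially the same approach as the paper: both decompose the butterfly into its two $v$-centered wedges sharing $\{u_i,u_k\}$ and observe that a symmetric wedge contributes an even number of negative edges while an asymmetric wedge contributes exactly one, so balance forces the two wedges to have the same type. Your parity phrasing merely streamlines the paper's explicit three-case enumeration ($0/2/4$ versus $2$ versus $1/3$ negative edges) without changing the argument.
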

\begin{proof}
Considering a signed bipartite graph $ G = (U, V, E)$, let $butterfly({u,w,v,x})$ represent a balanced butterfly in $G$ in which $u,w \in U$ and $v,x \in V$ with two wedges $\lor(u,v,w)$ and $\lor(u,x,w)$.

\noindent\textbf{case-1.}~$\lor^{s}(u,v,w)$ and $\lor^{s}(u,x,w)$. In this case, the number of negative edges is $0$, $2$, or $4$, thus even.

\noindent\textbf{case-2.}~$\lor^{a}(u,v,w)$ and $\lor^{a}(u,x,w)$. In this case, the number of negative edges is $2$, thus even.

\noindent\textbf{case-3.}~$\lor^{s}(u,v,w)$ and $\lor^{a}(u,x,w)$. In this case, the number of negative edges is $1$ or $3$, thus odd. So, this combination cannot make a balanced butterfly.

This completes the proof.
\end{proof}



\subsection{\bbtwok Algorithm} 
A core idea of Algorithm~\ref{Algorithm:Bucket} is to systematically organize symmetric and asymmetric wedges so that balanced $(2,k)$-bicliques can be counted by evaluating the number of ways to select $k$ vertices from the centers of grouped wedges. To facilitate this, we use two hash maps, referred to as buckets, denoted by $B_1$ and $B_2$, where each key corresponds to a center vertex $w$, and the associated value represents the number of wedges in which $w$ participates.

The algorithm proceeds by iterating on each vertex $u \in U$. For every two-hop neighbor $w$ reachable from $u$, we count how many symmetric wedges involve $w$ and store this value in $B_1[w]$. Similarly, the number of asymmetric wedges involving $w$ is recorded in $B_2[w]$. Finally, to count balanced $(2,k)$-bicliques, we compute the number of possible ways to choose $k$ wedges from $B_1[w]$ (in the case of symmetric wedges) or from $B_2[w]$ (in the case of asymmetric wedges), depending on the type of balance being considered. Our algorithm is agnostic to the specific value of $k$; once wedges are grouped, we compute balanced butterflies using $\binom{n}{k}$-style aggregation (where $k = 2$). 

The \bbtwok algorithm described above provides an efficient \emph{sequential} approach for counting balanced $(2,k)$-bicliques, in which the size-two side $U$ is fixed by definition. In this work, however, we focus on the design of parallel algorithms for \emph{balanced butterfly counting}, corresponding to the $(2,2)$ case.

Unlike the general $(2,k)$ setting, balanced butterfly counting is symmetric with respect to the two partitions of the bipartite graph. This symmetry allows the counting process to be anchored on either partition without affecting correctness. Our multi-core and GPU algorithms utilise this property by dynamically choosing the partition with fewer vertices as the processing side, leading to reducing the number of candidate wedges and enhancing workload balance on parallel architectures.

We next present M-BBC, a multi-core algorithm, and then introduce G-BBC and G-BBC++ algorithms, designed for counting balanced butterflies in signed bipartite graphs.





\subsection{Multi-core parallelization}

The transition from serial to parallel implementation of the proposed algorithm posed several key challenges, including race conditions in shared data structures, redundant butterfly counting, and workload imbalance across threads.

To address these, we developed the M-BBC with careful attention to thread safety and scalability. We employ thread-local \emph{ConcurrentMap} to store wedge counts per vertex $u$, guaranteeing that no other thread accesses these maps when a thread is building wedge counts, accordingly eliminating the race conditions. A strict vertex-priority ordering ensures that each butterfly is counted exactly once. We further adopt dynamic scheduling to improve load balancing across threads, using TBB's work-stealing mechanism to automatically distribute tasks to available worker threads. In this, the computation of balanced butterflies is decomposed into fine-grained tasks corresponding to vertices and their associated neighbor processing. Each thread maintains a local queue of tasks; when a thread completes its assigned work, it steals pending work from another thread's queue, ensuring that a thread finishing early remains productive by assisting with tasks assigned to busier threads, effectively mitigating the impact of skewed or high-degree vertices.

\begin{algorithm}[t!]
\small
	\SetKwInOut{Input}{Input}
	\SetKwInOut{Output}{Output}
	\Input{$G$$(U,V,E)$, A signed bipartite graph.}
	\Output{Total number of $\balancebutterfly$.}
	\SetKwFunction{FMain}{BBFC}
    \SetKwProg{Fn}{Function}{}{}
    \Fn{\FMain{$G$}}{

$\balancebutterfly \gets 0$\\
$S \gets V$ \\
\If{$|V| > |U|$}{$S \gets U$}

       \ForPar{$u \in \{S\}$}{
       $B_1\gets 0$\ \\
       $B_2\gets 0$\ \\ 
        \ForPar{$v \in \Gamma(u)$ }{ 
        \ForPar{$w \in \Gamma(v)$ $|$ $p(w) < p(u)$}{
        \If{$\sigma(u,v) == \sigma(v,w)$}{
        $B_1[w]++$\
        }
        \Else{
        $B_2[w]++$\
        }
        }
        }
        \ForPar{$w\in B_1$}{
            $\balancebutterfly \gets\balancebutterfly + \binom{B_1[w]}{2}$    
        }
        \ForPar{$w\in B_2$}{
            $\balancebutterfly \gets\balancebutterfly + \binom{B_2[w]}{2}$    
        }

        }
  }
  return $\balancebutterfly$
\caption{M-BBC: Multi-core balanced butterfly counting}
\label{Algorithm:parallel}
\end{algorithm}

Next, nested parallelism is adopted for processing neighbors and $2$-hop neighbors, allowing tasks at multiple levels to run concurrently and be dynamically stolen by idle threads. By combining thread-local data structures for intermediate wedge counts with atomic operations strictly for the global butterfly accumulation, the implementation achieves high concurrency. This dynamic scheduling strategy is more effective for graphs with diverse degree distributions, preventing idle threads caused by uneven task distribution and ensuring near-optimal utilization of all available cores.

Finally, in contrast to the serial version, the multi-core implementation allocates wedge-related data structures locally to each thread and performs thread-safe aggregation at the outermost level, considerably reducing memory contention while maintaining correctness under parallel execution.

\subsubsection{M-BBC Algorithm}

The M-BBC algorithm counts the number of balanced butterflies in a signed bipartite graph $G = (U, V, E)$ by the advantage of vertex-level parallelism. To reduce redundant computation, the algorithm initially chooses the smaller side, $S = \min(U, V)$, as the processing side. This choice limits the set of vertices for which wedge enumeration is performed, thereby improving efficiency. The complete pseudocode is detailed in Algorithm~\ref{Algorithm:parallel}. 

One of the main challenges in parallelization arises when concurrent updates to wedge counts are directed to the appropriate buckets. To address this, we employ a \textit{parallel hash table} that supports insertion, deletion, and membership queries. The algorithm iterates over each vertex $u \in S$ in parallel. For an anchor vertex $u$, two additional data structures, $B_1$ and $B_2$, are initialized to maintain counts of symmetric and asymmetric wedges, respectively, specified by their endpoint vertex $w$. For each anchor vertex $u$, the algorithm checks all neighbors $v \in \Gamma(u)$ and, for each such neighbor, it traverses 2-hop neighbors $w \in \Gamma(v)$ that meet the priority condition $p(w) < p(u)$. This condition ensures that each butterfly is counted exactly once, therefore avoiding duplicate enumeration.

For every valid wedge $(u,v,w)$, the algorithm compares the signs of edges $(u,v)$ and $(v,w)$. If the two signs are equal, the wedge is classified as symmetric, and the corresponding counter in $B_1[w]$ is incremented; otherwise, $B_2[w]$ is incremented. Each such update is performed in constant time. Once all wedges associated with the vertex $u$ have been processed, the algorithm aggregates the counts stored in $B_1$ and $B_2$. For each endpoint vertex $w$, the number of balanced butterflies contributed by symmetric wedges is calculated as $\binom{B_1[w]}{2}$, and similarly for asymmetric wedges using $\binom{B_2[w]}{2}$. These values are incorporated into the global balanced butterfly counter. Since the vertices in $S$ are processed independently, the algorithm naturally supports parallel execution. The final output is the total number of balanced butterflies in the graph, stored in $\balancebutterfly$.



\subsubsection{Complexity Analysis} 

We analyze the time complexity of Algorithm~\ref{Algorithm:parallel}. The algorithm processes vertices from the smaller partition $S = \min(U,V)$ to reduce redundant computation. For each vertex $u \in S$, it explores all two-hop paths (or wedges) of the form $(u,v,w)$, where $v \in \Gamma(u)$ and $w \in \Gamma(v)$, subject to the priority condition $p(w) < p(u)$. This condition ensures that each butterfly is counted exactly once.

For a fixed vertex $u$, the running time is proportional to the number of such wedges, which is given by $\sum_{v \in \Gamma(u)} |\Gamma(v)|$. Each wedge is processed in constant time by updating the auxiliary structures $B_1$ and $B_2$. The final aggregation step iterates over these structures and is therefore linear in their sizes, which are bounded by the number of processed wedges.
By summing over all vertices in the processing side $S$, the time complexity is  $O(\sum_{u \in S} \sum_{v \in \Gamma(u)} |\Gamma(v)|)$. The outer loop vertices in $S$ are parallel since each vertex can be processed independently. Assuming that the workload is evenly distributed across $P$ threads, the total computation is shared uniformly, and the parallel running time becomes $O(\frac{1}{P}\sum_{u \in S} \sum_{v \in \Gamma(u)} |\Gamma(v)|)$.


\subsection{GPU beased parallelization}

Although serial computation of balanced butterflies in signed bipartite graphs has been studied, no existing work has explored GPU-based parallel algorithms for counting balanced butterflies. The main challenge lies in the massive combinatorial complexity of balanced butterfly enumeration, which involves traversing wedges with a specific sign pattern. Traditional CPU-based approaches, even when parallelized, rely on thread-local memory and dynamic scheduling to avoid race conditions. However, such strategies cannot be directly applied to GPUs, which require thousands of lightweight threads to exploit massive parallelism and have limited per-thread memory due to GPU hardware constraints. For example, allocating even a modest $0.1$ MB buffer per thread would exceed the GPU's memory capacity for millions of threads. Therefore, designing a fine-grained GPU-parallel version of balanced butterfly counting is essential to fully utilize the computational power of modern GPUs while ensuring correctness and efficiency. Similar to GPU-based butterfly counting \cite{xia2024gpu}, we face several challenges, including load imbalance, memory hierarchy optimization, and synchronization. In addition, our problem requires handling the balanced sign-based condition, which introduces new computational and memory challenges not considered in previous studies. 

\begin{figure}[ht]
\centering
\begin{tikzpicture}[node distance=0.55cm and 0.45cm]
\tikzstyle{whitedot} = [circle, draw=black, fill=black, minimum size=1.8mm, inner sep=0pt]
\tikzstyle{blackdot} = [circle, fill=black, minimum size=1.8mm, inner sep=0pt]

\node[anchor=west] at (0,3.7) {$\mathbf{Block_0}$};

\node at (0,2.8) {\footnotesize U(G)};
\node[whitedot, label=above:{\footnotesize$u_0$}] (U0) at (0.5,2.8) {};
\node at ($(U0)+(0.5,0)$) {\dots};

\node[anchor=west] at (2.5,2.2) {\footnotesize \textit{Threads parallel}};
\node[anchor=west] at (2.5,0.9) {\footnotesize \textit{Thread$_0$ updates shared map by tiles}};

\node at (0,1.5) {\footnotesize V(G)};
\node[blackdot, label=above:{\footnotesize$v_0$}] (V0) at (0.5,1.5) {};
\node[blackdot, label=above:{\footnotesize$v_1$}] (V1) at ($(V0)+(0.68,0)$) {};
\node at ($(V1)+(0.5,0)$) {\dots};
\node[blackdot, label=above:\footnotesize{$v_i$}] (Vi) at ($(V1)+(1.0,0)$) {};

\draw[-, thick] (U0) -- ($(V0.north) + (0, 3.5mm)$);
\draw[-, dashed] (U0) -- ($(V1.north) + (0, 3.5mm)$);
\draw[-, thick] (U0) -- ($(Vi.north) + (0, 3.5mm)$);

\node at (0,0.2) {\footnotesize U(G)};
\node[whitedot, label=above:{\footnotesize$u_0$}] (U0b) at (0.5,0.2) {};
\node[whitedot, label=above:{\footnotesize$u_1$}] (U1b) at ($(U0b)+(0.68,0)$) {};
\node at ($(U1b)+(0.5,0)$) {\dots};
\node[whitedot, label=above:{\footnotesize$u_{128}$}] (U128b) at ($(U1b)+(1.0,0)$) {};
\node[whitedot, label=above:{\footnotesize$u_{129}$}] (U129b) at ($(U128b)+(0.68,0)$) {};
\node at ($(U129b)+(0.5,0)$) {\dots};
\node[whitedot, label=above:{\footnotesize$u_{256}$}] (U256b) at ($(U129b)+(1.0,0)$) {};
\node[whitedot, label=above:{\footnotesize$u_{257}$}] (U257b) at ($(U256b)+(0.68,0)$) {};
\node at ($(U257b)+(0.5,0)$) {\dots};
\node at ($(U257b)+(0.5,0)$) {\dots};
\node[whitedot, label=above:{\footnotesize$u_{n-1}$}] (Un1b) at ($(U257b)+(1.0,0)$) {};
\node[whitedot, label=above:{\footnotesize$u_{n}$}] (Unb) at ($(Un1b)+(0.68,0)$) {};

\draw[-, thick] (V0) -- ($(U0b.north) + (0, 3.5mm)$);
\draw[-, thick] (V0) -- ($(U1b.north) + (0, 3.5mm)$);
\draw[-, dashed] (V0) -- ($(U128b.north) + (0, 3.5mm)$);

\node (endLoop1) at ($(U0b)+(1.0,0)$) {};   
\node (endLoop2) at ($(U128b)+(1.0,0)$) {}; 
\node (endLoop3) at ($(U256b)+(1.0,0)$) {}; 
\node (endLoop4) at ($(Un1b)+(1.0,0)$) {}; 
\node[
    draw,
    thick,
    rounded corners,
    fit=(U0b) (endLoop1),
    inner sep=3pt,
    xshift=3pt,
    yshift= -1pt,
    fill=gray,
    fill opacity=0.2
] (box1) {};
\node[below=-1pt of box1, anchor=north] {\footnotesize $loop_1$};

\node[
    draw,
    thick,
    rounded corners,
    fit=(U128b) (endLoop2),
    inner sep=3pt,
    xshift=3pt, 
    yshift= -1pt,
    fill=gray,
    fill opacity=0.2
] (box2) {};
\node[below=-1pt of box2, anchor=north] {\footnotesize $loop_2$};
\node[
    draw,
    thick,
    rounded corners,
    fit=(U256b) (endLoop3),
    inner sep=3pt,
    xshift=3pt,
    yshift= -1pt,
    fill=gray,
    fill opacity=0.2
] (box3) {};
\node[below=-1pt of box3, anchor=north] {\footnotesize $loop_3$};

\node[
    draw,
    thick,
    rounded corners,
    fit=(Un1b) (endLoop4),
    inner sep=3pt,
    xshift=3pt,
    yshift= -1pt,
    fill=gray,
    fill opacity=0.2
](box4) {};
\node[
    below=-1pt of box4,
    anchor=north
] {\footnotesize $ \mathrm{loop}_{\frac{n}{\mathrm{sharesize}}} $};

\node[align=center] at (0,-0.4) {\shortstack{\footnotesize shared \\[-0.2em] \footnotesize map}};
\end{tikzpicture}
\caption{Usage of tiles in shared map. }
\label{fig:chunk}
\end{figure}
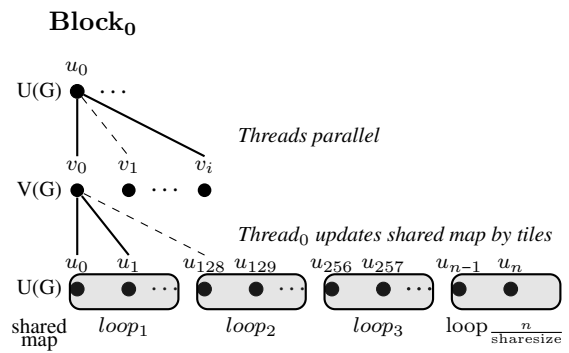

\subsubsection{G-BBC Algorithm}

We present the G-BBC algorithm in Algorithm~\ref{algo:gpu-bfc}, which leverages GPU parallelism to efficiently count balanced butterflies in signed bipartite graphs. GPU blocks process vertices $u$ from the smaller partition in a grid-stride manner, and the threads within that block cooperatively traverse the neighbors of $u$. To support sign-aware two-hop aggregation, each block allocates two shared-memory bitmaps that store intermediate counts of the symmetric and asymmetric wedge types, corresponding to the sign patterns $\lor^s$ and $\lor^a$ (Definitions~\ref{def:wedgesym} and \ref{def:wedgeasym}). Because shared memory cannot store counters for the entire two-hop neighborhood $N_G^{2}(u)$ at once, the algorithm partitions $N_G^{2}(u)$ into contiguous tiles (as illustrated in Fig.~\ref{fig:chunk}). In each iteration, a block loads one tile of size $T$ into shared memory, initializes the corresponding bitmap entries, and then has its threads accumulate sign-consistent wedge counts by exploring all two-hop paths $(u, v, w)$ whose endpoint $w$ lies within the current tile. After processing a tile, the block reduces the bitmap counters using warp-level and block-level shuffle operations, yielding the balanced butterfly contribution for that tile. This load–count–reduce tiled workflow repeats until all tiles covering $N_G^{2}(u)$ have been processed, ensuring efficient shared-memory usage and maintaining high GPU parallel throughput.

\begin{algorithm}[t!]
\small
\SetKwInOut{Input}{Input}
\SetKwInOut{Output}{Output}
\Input{$G(U, V, E, \sigma)$: A signed bipartite graph.}
\Output{$B(G)$: Total number of  $\balancebutterfly.$\\}

 $\balancebutterfly \gets 0$\\
$S \gets V$ \\
\If{$|V| >|U|$}{$S \gets U$}
\ForEach{$u = \text{blockIdx.x}$ and $u.id < S $}{
\If{$threadIdx.x = 0$}{initialize two shared hashmaps $B_1$, $B_2$ with $0$ }

    \ForEach{$v \in \Gamma(u)$}{
        \ForEach{$w \in \Gamma(v)$}{
            \If{$w.\text{id} > u.\text{id}$}{
              \If{$\sigma(u, v) == \sigma(v, w)$}{
    $atomicAdd$($\texttt{B}_1[w]$)

}
\Else{
    $atomicAdd$($\texttt{B}_2[w])$
}
            }
        }
    }
    $syncthreads();$

    $\balancebutterfly \gets$ $blockreduce$($ \binom{B_1[w]}{2} + \binom{B_2[w]}{2}$) \\
    
$u.id += gridDim$
   
}
return $\balancebutterfly$

\caption{G-BBC: GPU-based balanced butterfly counting.}
\label{algo:gpu-bfc}
\end{algorithm}

\textbf{G-BBC Algorithm:}
We process the flow of each vertex as follows. First, we select the partition with fewer vertices as the starting layer, which is denoted as S (Lines 2-4). We initialize the two shared bitmaps ($B_1$ and $B_2$) to $0$ (lines 6-7) by passing this task to the first thread of each block. Second, threads cooperatively traverse the adjacency list of the vertex and explore its neighbors (Line 8), and access the corresponding two-hop neighbors (Line 9). Third, we judge whether the two-hop neighbors are stored in a symmetric or an asymmetric bitmap based on the signed patterns. If so, we safely update the shared bitmap using atomic operations (lines 11-14). Here, we use $atomicAdd(B_1[w])$ and $atomicAdd(B_2[w])$ to update the shared map, as multiple threads can access it simultaneously. Then we need to do $syncthreads()$, which acts as a synchronization barrier to guarantee the consistency of the shared bitmaps across threads (Line 15). Once all neighbors are processed, each thread performs a local aggregation over its portion of the bitmap to compute partial counts of balanced butterflies. A warp-level reduction using \texttt{\_\_shfl\_down\_sync} is applied to sum the counts within each warp. The sums per warp are stored in a shared array \texttt{warp\_sums}, followed by a final reduction throughout the block performed by $thread_0$. The result is atomically accumulated into the global sum \texttt{sums}. Finally, the block proceeds to process another vertex after it finishes processing the current workload (Line 17).

\subsubsection{Complexity Analysis}

In Algorithm~\ref{algo:gpu-bfc}, the initialization steps, including selecting the smaller partition $S = \min(U, V)$ and allocating shared memory data structures, incur constant overhead per block. The algorithm processes vertices $u \in S$ using GPU blocks in a grid-stride manner. However, the asymptotic analysis is based on total work rather than assuming ideal block-level concurrency. For a fixed vertex $u$, the algorithm enumerates all two-hop paths (wedges) of the form $(u,v,w)$ where $v \in \Gamma(u)$ and $w \in \Gamma(v)$, subject to an orientation constraint that avoids duplicate counting. Each valid wedge results in a constant-time atomic update to the shared-memory bitmaps. Consequently, the total amount of work performed for vertex $u$ is proportional to $\sum_{v \in \Gamma(u)} \deg(v)$. After wedge enumeration, the algorithm performs block-level aggregation using shared-memory and warp-level reductions. This step is linear in the number of distinct wedge endpoints and bounded by the number of processed wedges, with constant synchronization overhead per block. Summing over all vertices in the processing side $S$, the total work complexity of the G-BBC algorithm is $ O\left(\sum_{u \in S} \sum_{v \in \Gamma(u)} \deg(v) \right) $. Although GPU parallelism significantly reduces wall-clock runtime in practice, it does not alter the asymptotic work complexity of the algorithm.


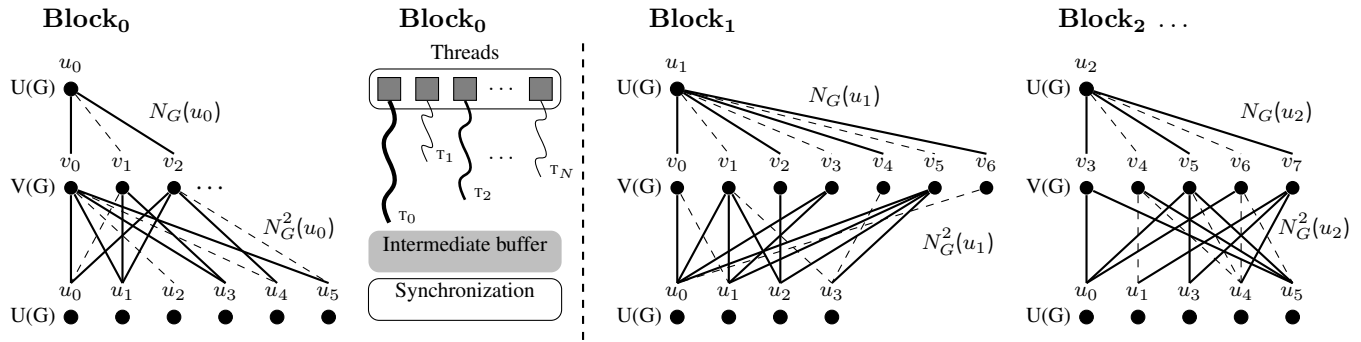
\begin{figure*}[ht]
\begin{tikzpicture}[node distance=0.55cm and 0.45cm]
\tikzstyle{whitedot} = [circle, draw=black, fill=black, minimum size=1.8mm, inner sep=0pt]
\tikzstyle{blackdot} = [circle, fill=black, minimum size=1.8mm, inner sep=0pt]

\node[anchor=west] at (0,3.7) {$\mathbf{Block_0}$};

\node at (0,2.8) {\footnotesize U(G)};
\node[whitedot, label=above:{\footnotesize$u_0$}] (U0) at (0.5,2.8) {};

\node at (2,2.5) {\footnotesize $N_G(u_0)$};
\node at (3.5,1.0) {\footnotesize $N_G^2(u_0)$};

\node at (0,1.5) {\footnotesize V(G)};
\node[blackdot, label=above:{\footnotesize$v_0$}] (V0) at (0.5,1.5) {};
\node[blackdot, label=above:{\footnotesize$v_1$}] (V1) at ($(V0)+(0.68,0)$) {};
\node[blackdot, label=above:\footnotesize{$v_2$}] (V2) at ($(V1)+(0.68,0)$) {};
\node at ($(V2)+(0.5,0)$) {\dots};

\draw[-, thick] (U0) -- ($(V0.north) + (0, 3.5mm)$);
\draw[-, dashed] (U0) -- ($(V1.north) + (0, 3.5mm)$);
\draw[-, thick] (U0) -- ($(V2.north) + (0, 3.5mm)$);

\node at (0,-0.2) {\footnotesize U(G)};
\node[whitedot, label=above:{\footnotesize$u_0$}] (U0b) at (0.5,-0.2) {};
\node[whitedot, label=above:{\footnotesize$u_1$}] (U1b) at ($(U0b)+(0.68,0)$) {};
\node[whitedot, label=above:{\footnotesize$u_2$}] (U2b) at ($(U1b)+(0.68,0)$) {};
\node[whitedot, label=above:{\footnotesize$u_3$}] (U3b) at ($(U2b)+(0.68,0)$) {};
\node[whitedot, label=above:{\footnotesize$u_4$}] (U4b) at ($(U3b)+(0.68,0)$) {};
\node[whitedot, label=above:{\footnotesize$u_5$}] (U5b) at ($(U4b)+(0.68,0)$) {};

\draw[-, thick] (V0) -- ($(U0b.north) + (0, 3.5mm)$);
\draw[-, thick] (V0) -- ($(U1b.north) + (0, 3.5mm)$);
\draw[-, dashed] (V0) -- ($(U2b.north) + (0, 3.5mm)$);
\draw[-, thick] (V0) -- ($(U3b.north) + (0, 3.5mm)$);
\draw[-, dashed] (V0) -- ($(U4b.north) + (0, 3.5mm)$);
\draw[-, thick] (V0) -- ($(U5b.north) + (0, 3.5mm)$);

\draw[-, dashed] (V1) -- ($(U0b.north) + (0, 3.5mm)$);
\draw[-, thick] (V1) -- ($(U1b.north) + (0, 3.5mm)$);
\draw[-, thick] (V1) -- ($(U3b.north) + (0, 3.5mm)$);

\draw[-, thick] (V2) -- ($(U0b.north) + (0, 3.5mm)$);
\draw[-, thick] (V2) -- ($(U1b.north) + (0, 3.5mm)$);
\draw[-, thick] (V2) -- ($(U4b.north) + (0, 3.5mm)$);
\draw[-, dashed] (V2) -- ($(U5b.north) + (0, 3.5mm)$);

\node[anchor=west] at (4.7,3.7) {$\mathbf{Block_0}$};

\begin{scope}[shift={(4.7,2.8)}]
  \foreach \i/\x in {0/0,1/0.5,2/1.0}{
    \node[draw, minimum width=0.3cm, minimum height=0.3cm, fill=gray] (blk\i) at (\x,0) {};
  }

\node at (6,-0.1) {\footnotesize $N_G(u_1)$};
\node at (7.5,-2.0) {\footnotesize $N_G^2(u_1)$};

  \node[draw, minimum width=0.3cm, minimum height=0.3cm, fill=gray] (blkLast) at (2.0,0) {};

  \node at ($(blk2)!0.5!(blkLast)$) {\footnotesize$\dots$};

\node at ($(blk2)!0.5!(blkLast)+(0,-0.90)$) {\footnotesize$\dots$};

\path[draw, ultra thick]
  (blk0.south)
  .. controls +(0.25,-0.5) and +(-0.25,0.5) .. ++(0,-0.8)
  .. controls +(0.25,-0.5) and +(-0.25,0.5) .. ++(0,-0.8)
  coordinate (T0end);

\node[below=-2pt of T0end, anchor=west] {\tiny T$_0$};

\path[draw, thin]
  (blk1.south)
  .. controls +(0.25,-0.5) and +(-0.25,0.5) .. ++(0,-0.4)
  .. controls +(0.25,-0.5) and +(-0.25,0.5) .. ++(0,-0.4)
  coordinate (T1end);

\node[below=-2pt of T1end, anchor=west] {\tiny T$_1$};

\path[draw, line width=1.0pt]
  (blk2.south)
  .. controls +(0.25,-0.5) and +(-0.25,0.5) .. ++(0,-0.65)
  .. controls +(0.25,-0.5) and +(-0.25,0.5) .. ++(0,-0.65)
  coordinate (T2end);

\node[below=-2pt of T2end, anchor=west] {\tiny T$_2$};

\path[draw,  thin]
  (blkLast.south)
  .. controls +(0.25,-0.5) and +(-0.25,0.5) .. ++(0,-0.5)
  .. controls +(0.25,-0.5) and +(-0.25,0.5) .. ++(0,-0.5)
  coordinate (TNend);

\node[below=-2pt of TNend, anchor=west] {\tiny T$_N$};

\node[draw, rounded corners, fit=(blk0)(blkLast), inner sep=3pt, label=above:{\footnotesize Threads}] (threadbox) {};

\node[fill=gray!50, rounded corners, 
      fit=(blk0)(blkLast), 
      minimum height=0.3cm,  
      below=1.7cm of blk2] (barrier) {\footnotesize Intermediate buffer};;

\node[draw, fill=white, rounded corners,
      fit=(blk0)(blkLast), 
      minimum height=0.3cm, 
      below=0.35cm of blk2] at (barrier) {\footnotesize Synchronization};

\end{scope}

\draw[dashed, thick] 
  ($(blkLast.east)+(0.4,0.6)$) -- 
  ++(0,-4); 

\begin{scope}[shift={(8,0)}] 
\node[anchor=west] at (0,3.7) {$\mathbf{Block_1}$};

\node at (0,2.8) {\footnotesize U(G)};
\node[whitedot, label=above:{\footnotesize$u_1$}] (U1) at (0.5,2.8) {};

\node at (0,1.5) {\footnotesize V(G)};
\node[blackdot, label=above:{\footnotesize$v_0$}] (V1) at (0.5,1.5) {};
\node[blackdot, label=above:{\footnotesize$v_1$}] (V2) at ($(V1)+(0.68,0)$) {};
\node[blackdot, label=above:{\footnotesize$v_2$}] (V3) at ($(V2)+(0.68,0)$) {};
\node[blackdot, label=above:{\footnotesize$v_3$}] (V4) at ($(V3)+(0.68,0)$) {};
\node[blackdot, label=above:{\footnotesize$v_4$}] (V5) at ($(V4)+(0.68,0)$) {};
\node[blackdot, label=above:{\footnotesize$v_5$}] (V6) at ($(V5)+(0.68,0)$) {};
\node[blackdot, label=above:{\footnotesize$v_6$}] (V7) at ($(V6)+(0.68,0)$) {};

\draw[-, thick] (U1) -- ($(V1.north)+(0,3.5mm)$);
\draw[-, dashed] (U1) -- ($(V2.north)+(0,3.5mm)$);
\draw[-, thick] (U1) -- ($(V3.north)+(0,3.5mm)$);
\draw[-, dashed] (U1) -- ($(V4.north)+(0,3.5mm)$);
\draw[-, thick] (U1) -- ($(V5.north)+(0,3.5mm)$);
\draw[-, dashed] (U1) -- ($(V6.north)+(0,3.5mm)$);
\draw[-, thick] (U1) -- ($(V7.north)+(0,3.5mm)$);

\node at (0,-0.2) {\footnotesize U(G)};
\node[whitedot, label=above:{\footnotesize$u_0$}] (U1) at (0.5,-0.2) {};
\node[whitedot, label=above:{\footnotesize$u_1$}] (U2) at ($(U1)+(0.68,0)$) {};
\node[whitedot, label=above:{\footnotesize$u_2$}] (U3) at ($(U2)+(0.68,0)$) {};
\node[whitedot, label=above:{\footnotesize$u_3$}] (U4) at ($(U3)+(0.68,0)$) {};


\draw[-, thick]   (V1) -- ($(U1.north)+(0,3.5mm)$);
\draw[-, dashed]  (V1) -- ($(U2.north)+(0,3.5mm)$);

\draw[-, thick]   (V2) -- ($(U1.north)+(0,3.5mm)$);
\draw[-, thick]   (V2) -- ($(U2.north)+(0,3.5mm)$);
\draw[-, thick]   (V2) -- ($(U3.north)+(0,3.5mm)$);
\draw[-, dashed]  (V2) -- ($(U4.north)+(0,3.5mm)$);

\draw[-, thick]   (V3) -- ($(U3.north)+(0,3.5mm)$);

\draw[-, thick]   (V4) -- ($(U1.north)+(0,3.5mm)$);
\draw[-, thick]   (V4) -- ($(U2.north)+(0,3.5mm)$);

\draw[-, dashed]  (V5) -- ($(U4.north)+(0,3.5mm)$);

\draw[-, thick]   (V6) -- ($(U1.north)+(0,3.5mm)$);
\draw[-, thick]   (V6) -- ($(U2.north)+(0,3.5mm)$);
\draw[-, thick]   (V6) -- ($(U3.north)+(0,3.5mm)$);
\draw[-, thick]   (V6) -- ($(U4.north)+(0,3.5mm)$);

\draw[-, dashed]  (V7) -- ($(U1.north)+(0,3.5mm)$);

\end{scope}

\begin{scope}[shift={(13.4,0)}] 
\node[anchor=west] at (0,3.7) {$\mathbf{Block_2} \ \dots$};

\node at (0,2.8) {\footnotesize U(G)};
\node[whitedot, label=above:{\footnotesize$u_2$}] (U2top) at (0.5,2.8) {};

\node at (3.0,2.5) {\footnotesize $N_G(u_2)$};
\node at (3.5,1.0) {\footnotesize $N_G^2(u_2)$};

\node at (0,1.5) {\footnotesize V(G)};
\node[blackdot, label=above:{\footnotesize$v_3$}] (V1b) at (0.5,1.5) {};
\node[blackdot, label=above:{\footnotesize$v_4$}] (V2b) at ($(V1b)+(0.68,0)$) {};
\node[blackdot, label=above:{\footnotesize$v_5$}] (V3b) at ($(V2b)+(0.68,0)$) {};
\node[blackdot, label=above:{\footnotesize$v_6$}] (V4b) at ($(V3b)+(0.68,0)$) {};
\node[blackdot, label=above:{\footnotesize$v_7$}] (V5b) at ($(V4b)+(0.68,0)$) {};

\draw[-, thick] (U2top) -- ($(V1b.north)+(0,3.5mm)$);
\draw[-, dashed] (U2top) -- ($(V2b.north)+(0,3.5mm)$);
\draw[-, thick] (U2top) -- ($(V3b.north)+(0,3.5mm)$);
\draw[-, dashed] (U2top) -- ($(V4b.north)+(0,3.5mm)$);
\draw[-, thick] (U2top) -- ($(V5b.north)+(0,3.5mm)$);

\node at (0,-0.2) {\footnotesize U(G)};
\node[whitedot, label=above:{\footnotesize$u_0$}] (U0low) at (0.5,-0.2) {};
\node[whitedot, label=above:{\footnotesize$u_1$}] (U1low) at ($(U0low)+(0.68,0)$) {};
\node[whitedot, label=above:{\footnotesize$u_3$}] (U3low) at ($(U1low)+(0.68,0)$) {};
\node[whitedot, label=above:{\footnotesize$u_4$}] (U4low) at ($(U3low)+(0.68,0)$) {};
\node[whitedot, label=above:{\footnotesize$u_5$}] (U5low) at ($(U4low)+(0.68,0)$) {};

\draw[-, thick] (V1b) -- ($(U0low.north)+(0,3.5mm)$);
\draw[-, thick] (V1b) -- ($(U5low.north)+(0,3.5mm)$);

\draw[-, dashed] (V2b) -- ($(U1low.north)+(0,3.5mm)$);
\draw[-, dashed] (V2b) -- ($(U4low.north)+(0,3.5mm)$);
\draw[-, thick] (V2b) -- ($(U5low.north)+(0,3.5mm)$);

\draw[-, thick] (V3b) -- ($(U0low.north)+(0,3.5mm)$);
\draw[-, thick] (V3b) -- ($(U3low.north)+(0,3.5mm)$);
\draw[-, dashed] (V3b) -- ($(U4low.north)+(0,3.5mm)$);
\draw[-, thick] (V3b) -- ($(U5low.north)+(0,3.5mm)$);

\draw[-, thick] (V4b) -- ($(U0low.north)+(0,3.5mm)$);
\draw[-, dashed] (V4b) -- ($(U4low.north)+(0,3.5mm)$);
\draw[-, dashed] (V4b) -- ($(U5low.north)+(0,3.5mm)$);

\draw[-, thick] (V5b) -- ($(U1low.north)+(0,3.5mm)$);
\draw[-, thick] (V5b) -- ($(U3low.north)+(0,3.5mm)$);
\draw[-, thick] (V5b) -- ($(U4low.north)+(0,3.5mm)$);

\end{scope}
\end{tikzpicture}
\caption{Example of workload variance across GPU blocks and threads. }
\label{fig:workload}
\end{figure*}

\section{Optimization in GPU: Load balancing}
\label{sec:load}
 In this section, we present our design to address the problem of load imbalance with static scheduling and intra-block load imbalance.

 In the earlier version of our GPU algorithm, we adopted a static scheduling strategy in which each GPU block was permanently assigned to process a fixed vertex from the designated partition. Although simple to implement, this approach leads to severe load imbalance because the degrees of vertices in real-world bipartite graphs are highly skewed. Vertices with small degrees finish quickly, causing their blocks to remain idle while other blocks continue processing vertices with substantially larger neighborhoods. As a result, a large portion of the GPU's streaming multiprocessors (SMs) stay underutilized, limiting overall throughput.

 To address this, we adopt a dynamic scheduling model based on persistent thread blocks. Instead of launching one block for every vertex, we allocate only a small set of long-lived blocks, typically proportional to the number of streaming multiprocessors, and these blocks remain active throughout the computation. Each block repeatedly fetches a new vertex to process from a global atomic task counter. As soon as the block completes its current vertex, it immediately retrieves another vertex without waiting for the other blocks. This persistent execution model allows the GPU to redistribute work automatically, ensuring high hardware occupancy even when the input graph contains vertices whose degrees differ by several orders of magnitude. The dynamic algorithm does not store its wedge counters or two-hop bitmaps in shared memory because shared memory is too small to accommodate the full two-hop destination space of a vertex, which often spans tens to hundreds of thousands of vertices in  real bipartite graphs. While our static tiled version uses shared memory effectively by processing it in small tiles, dynamic scheduling repeatedly assigns new vertices to blocks, making per-task initialization of large shared-memory tiles costly and inefficient. To maintain high SM occupancy and avoid repeated shared-memory resets, the dynamic kernel allocates its per-block buffers in global memory, while shared memory is used only for lightweight block-level reduction. This enables performance over diverse degree distributions and boosts runtime stability under dynamic load balancing.

Balanced butterfly counting in large-scale signed bipartite networks raises significant computational challenges due to the highly uneven degree distributions typical of real-world networks. In Algorithm~\ref{algo:gpu-bfc}, each GPU block is responsible for processing a particular vertex, with the adjacency lists varying widely in size. Within each block, threads deal with different divisions of the adjacency list assigned to the vertex. For instance, as demonstrated in Fig.~\ref{fig:workload}, blocks 0, 1, and 2 have varying workloads, mostly influenced by the degrees of their allocated vertices. However, because the sizes of the adjacency lists can vary considerably, the computational time per block becomes uneven. Therefore, the computational workload across threads is highly uneven, resulting in significant intra-block load imbalance. To fill this, we employ a hierarchical workload distribution strategy that dynamically selects the most appropriate GPU kernel for each vertex \cite{xia2024gpu}. The workload of a vertex is determined by two key metrics: its degree and its two-hop neighborhood size, both of which directly influence the computational complexity of butterfly counting.

\begin{algorithm}[t!]
\small
\SetKwInOut{Input}{Input}
\SetKwInOut{Output}{Output}
\Input{$G(U,V,E,\sigma)$: A signed bipartite graph.}
\Output{$B(G)$: Total number of $\balancebutterfly$.}
\BlankLine

\BlankLine
\tcp*[f]{Host-side preprocessing:}
Select the smaller partition $S=\min(U,V)$\;
\ForEach{$u \in S$}{
    $\text{fanout}(u) \gets \sum_{v \in N(u)} \deg(v) + \deg(u)$
}
Sort vertices in $S$ in decreasing order of $\text{fanout}(u)$\;
\tcp*[f]{GPU kernel execution:}
\BlankLine
$\balancebutterfly \gets 0$ \\
$\texttt{nextTask} \gets 0$ \tcp*[f]{Dynamic scheduling}

\While{true}{ 
    \If{\textnormal{threadIdx.x} = 0}{
        $taskIdx \gets atomicAdd(\texttt{nextTask}, 1)$
    }
    $\texttt{syncthreads}()$

    \If{$taskIdx \ge |S|$}{ \textbf{break} }

    $u \gets S[taskIdx]$ \\


    \If{$\big|N_G(u)\cap N_G^{2}(u)\big| < threshold$}{
    use warp-level cooperation\;
}
\Else{
    use block-level cooperation\;
}
    Initialize $B_1$ and $B_2$ to $0$ \\   
    \ForEach{$v \in \Gamma(u)$}{
        \ForEach{$w \in \Gamma(v)$}{
            \If{$w.\text{id} > u.\text{id}$}{
                \If{$\sigma(u,v) = \sigma(v,w)$}{
                    $B_1[w] \gets B_1[w] + 1$
                }
                \Else{
                    $B_2[w] \gets B_2[w] + 1$
                }
            }
        }
    }

    \BlankLine
    $\balancebutterfly \gets$ $reduction$($ \binom{B_1[w]}{2} + \binom{B_2[w]}{2}$) \\
    
    $\texttt{syncthreads}()$
}

\Return $\balancebutterfly$

\caption{G-BBC++: GPU Dynamic scheduling Balanced Butterfly Counting}
\label{algo:gpu-bfc1}
\end{algorithm}

\subsection{G-BBC++ Algorithm} 
 We present the G-BBC++ pseudocode for dynamic load balance in Algorithm \ref{algo:gpu-bfc1}. We utilize dynamic scheduling, adaptive intra-block parallelism, and block-local workspaces because they eliminate load imbalance, reduce memory overhead from millions of entries to only those corresponding to active elements, maximize thread utilization, and enable the GPU to process each vertex independently and effectively.
 
 First, we choose the smaller partition $S$ because butterfly counting expands from a vertex $u$ through its neighbors and their neighbors, so starting from the smaller side minimizes the number of two-hop traversals. Next, we compute a fanout score for each $u \in S$ and sort the vertices by decreasing the fanout value so that heavy vertices are processed early by fully active thread blocks. This prevents situations where some thread blocks finish quickly on low-degree vertices while others remain stuck on very high-degree vertices, thus improving overall GPU utilization. 
 
 Dynamic scheduling using a global counter is essential because vertices in real bipartite graphs have highly skewed degrees. Static one-thread-block-per-vertex assignment would cause severe load imbalance. Instead, each block repeatedly fetches a new vertex to process after completing its current one, ensuring that all streaming multiprocessors (SMs) are continuously busy regardless of degree skew.
 
 Each block allocates its own private counters $B_1$ and $B_2$ (for storing symmetric and asymmetric wedges). We use intra-block parallelism because $|N_G(u)|$ and $|N_G^{2}(u)|$ vary widely across vertices. If all threads were always used, small-degree vertices would spend most of their time idle, resulting in poor performance. Conversely, if only one warp were used for all vertices, large-degree vertices would become bottlenecks. Therefore, we employ three execution regimes: warp-only, partial-block, and full-block cooperation, based on the degree and 2-hop degree of the processed vertex $u$. Specifically, vertices whose degree is < $32$ are handled by a single warp, vertices whose degree lies between $32 \le 512$ use partial-block cooperation, and vertices with degree $>512$ are processed using the full block. This technique offers high efficiency across highly biased degree distributions. During the 2-hop exploration, each thread performs atomic updates to $B_1[w]$ and $B_2[w]$, since multiple threads may encounter the same $w$. The use of atomic operations offers correctness under high parallelism. After the enumeration, the algorithm aggregates and performs $\binom{B_1[w]}{2}$ + $\binom{B_2[w]}{2}$ for each $w$. A warp-shuffle-based reduction is employed as it avoids global or shared-memory barriers and offers highly efficient intra-block summation.

\subsection{Complexity Analysis} 
The algorithm first selects the smaller side of the two partitions, $S = \min(U, V)$, in constant time. For each vertex $u \in S$, it computes a fanout score $\text{fanout}(u)=\sum_{v \in \Gamma(u)} \deg(v) + \deg(u)$, which requires $O(\sum_{u \in S} \deg(u))$ time in total. The vertices in $S$ are then sorted in decreasing order of fanout, incurring a cost of $O(|S| \log |S|)$. This preprocessing is performed once on the host and is asymptotically dominated by the subsequent GPU computation for large graphs.

The dominant cost arises from the GPU kernel, where vertices $u \in S$ are processed using dynamic scheduling with persistent thread blocks. For a fixed vertex $u$, the kernel enumerates all two-hop paths $(u,v,w)$ by iterating over all neighbors $v \in \Gamma(u)$ and, for each such neighbor, all neighbors $w \in \Gamma(v)$. Consequently, the work required to process a single vertex $u$ is proportional to $\sum_{v \in \Gamma(u)} \deg(v)$. The use of adaptive intra-block parallelism, such as warp- or block-level cooperation, affects only constant factors and does not change the asymptotic work.
By summing over all vertices in the selected partition $S$, the total work complexity is $O(\sum_{u \in S} \sum_{v \in \Gamma(u)} \deg(v))$. Dynamic scheduling ensures that this work is evenly distributed across GPU blocks, mitigating load imbalance caused by highly skewed degree distributions. As a result, the parallel running time is determined by this total work divided among the available GPU blocks, while the asymptotic complexity remains unchanged.


\section{Experimental Evaluation}
\label{Sec:EE}
In this section, we assess the performance of the proposed algorithms, M-BBC and G-BBC, and compare them with the CPU-based BB2K and SBCList++ methods.

\textbf{Computing Resources}. We implemented all the algorithms in C++. Experiments were conducted on a workstation with an Intel Xeon E5-2697 v3 dual-socket CPU (28 physical cores, 56 threads, 2.60~GHz), 256~GB RAM, and an NVIDIA Quadro RTX~6000 GPU (24~GB, CUDA~12.9, Driver~575.51.03), running Ubuntu 64-bit. We use the \texttt{Intel TBB} library to enable parallelism in our implementation. Specifically, we employ \texttt{tbb::parallel\_for} to parallelize loops, \texttt{tbb::concurrent\_hash\_map} for thread-safe bucket operations, and \texttt{tbb::atomic} for synchronized updates to the balanced butterfly count.

\textbf{Algorithms:} We evaluate the performance of the following algorithms.

\begin{itemize}
    \item \textbf{SBCList++}:
    An adaptation of BCList++~\cite{yang2023p}, originally designed for enumerating $(p,q)$-bicliques in unsigned bipartite graphs. We extend and enhance it to support counting balanced $(2,2)$-bicliques in \emph{signed} bipartite graphs.

    \item \textbf{BB2K}:
    A serial baseline algorithm for counting balanced $(2,2)$-bicliques, proposed in our earlier work~\cite{kiran2024efficient}.

    \item \textbf{M-BBC}:
    The proposed multi-core parallel algorithm for counting balanced butterflies, as described in Section~\ref{Sec:alg}.

    \item \textbf{G-BBC}:
    The proposed GPU-based parallel algorithm for balanced butterfly (i.e., $(2,2)$-biclique) counting, also described in Section~\ref{Sec:alg}.

    \item \textbf{G-BBC++}:  
An enhanced version of G-BBC that incorporates load balancing and dynamic scheduling to improve workload distribution across GPU threads and reduce imbalance, as described in Section~\ref{sec:load}.
    
\end{itemize}


\subsection{Datasets Description}

\begin{table}[!ht]
\small
\renewcommand{\arraystretch}{1.2}
\caption{Characteristics of the Datasets.}
\label{table:datasets}

\resizebox{\columnwidth}{!}{
\begin{tabular}{|c|c|c|c|}
\hline
\textbf{Dataset} & $|U|$ & $|V|$  & $|E|$ \\
\hline
Senate (\textbf{SE})        & $145$      & $1,201$      & $27,083$      \\
\hline
DBLP (\textbf{DBLP})        & $6,001$    & $1,308$      & $29,256$      \\
\hline
House (\textbf{HO})         & $515$      & $1,281$      & $114,378$     \\ 
\hline
Wiki-Nap (\textbf{NAP})     & $1,753$    & $25,881$     & $265,546$     \\ \hline
BookCrossing (\textbf{BC})  & $77,802$   & $185,955$    & $433,652$     \\ \hline
NIPS Papers (\textbf{NIPS}) & $1,500$    & $12,375$     & $746,315$     \\ \hline
Last.fm (\textbf{LF})       & $992$      & $174,077$    & $898,062$     \\ \hline
Movielens (\textbf{MV})     & $6,040$    & $3,706$      & $1,000,208$   \\ \hline
Jetser 150 (\textbf{JE})    & $50,692$   & $140$        & $1,728,847$   \\ \hline
KDD Cup (\textbf{KDD})      & $255,170$  & $1,848,114$  & $2,766,393$   \\ \hline
Digg Votes (\textbf{DG})    & $139,409$  & $3,553$      & $3,010,197$   \\ \hline
AOL (\textbf{AOL})          & $4,811,647$ & $1,632,788$ & $10,741,953$  \\ \hline
Epinions (\textbf{EP})      & $120,492$  & $755,760$    & $13,668,320$  \\ \hline
Netflix (\textbf{Netflix})  & $480,189$  & $17,770$     & $100,480,507$ \\ \hline
Yahoo (\textbf{Yahoo})      & $1,000,990$ & $624,961$   & $256,804,235$ \\
\hline
\end{tabular}
}
\end{table}

We evaluate our proposed algorithms using a diverse collection of $15$ real-world bipartite datasets, encompassing both sparse and dense graphs. The \textbf{SE} and \textbf{HO} datasets are obtained from the signed bipartite repository\footnote{\url{https://github.com/tylersnetwork/signed_bipartite_networks}}. The \textbf{DBLP}, \textbf{MV}, \textbf{KDD}, \textbf{AOL}, and \textbf{Netflix} datasets are sourced from the bipartite network repository\footnote{\url{https://renchi.ac.cn/datasets/}}. The remaining datasets, \textbf{NAP}, \textbf{BC}, \textbf{NIPS}, \textbf{LF}, \textbf{JE}, \textbf{DG}, \textbf{EP}, and \textbf{Yahoo}, are obtained from the KONECT collection\footnote{\url{http://konect.cc/networks/}}. 

Among these datasets, SE and HO are originally provided as \emph{signed bipartite graphs}. The others are supplied as unsigned networks, some of which include rating information that implicitly encodes positive or negative interactions. To ensure a consistent, comparable notion of polarity across all datasets, we transform the rating-based graphs into signed bipartite graphs by assigning binary sign labels to their rating values. For instance, in the Jester (JE) dataset, which uses a $10$-star rating system, ratings strictly greater than $6$ are assigned a value of $1$, while all other ratings are mapped to $0$. Similarly, in the MV, EP, and Netflix datasets, which use a $5$-star rating system, ratings of four and above are categorized as $1$, and lower ratings are categorized as $0$. This normalization ensures consistent polarity interpretation across datasets with heterogeneous rating schemes. The unsigned datasets that do not contain rating information, specifically DBLP, BC, NAP, NIPS, LF, KDD, DG, AOL, and Yahoo, are converted into synthetic signed bipartite graphs following the methodology in~\cite{b2}. Each edge in these networks is assigned a positive label $(1)$ with probability $0.7$ and a negative label $(0)$ with probability $0.3$, producing realistic polarity distributions for evaluating balanced butterfly detection. 

Several datasets, such as DG, NAP, and LF, contain duplicate edges between the same pair of nodes, often with conflicting signs (an edge might appear multiple times with different signs). To fix these inconsistencies, we perform a preprocessing step in which, for each duplicated edge, we keep the most recent interaction based on its timestamp.
A complete summary of datasets is presented in Table~\ref{table:datasets}.

\subsection{Performance Assessment}
In this part, we evaluate the performance of the proposed algorithms on the earlier-mentioned datasets, comparing them with the baseline algorithms in terms of processing speed.
















\subsubsection{Performance of BB2K vs SBCList++}

We initially compare our \bbtwok, serial implementation over the adapted baseline SBCList++. As presented in Table~\ref{Tab:BB2k-SBCList++}, the \bbtwok algorithm exhibits notable performance improvements over a wide range of datasets. For instance, on the NAP, LF, and JE datasets, it is often superior to SBCList++ by considerable margins. In particular, on the NAP dataset, SBCList++ takes $356$ seconds to count balanced butterflies, whereas \bbtwok completes in $2.7$ seconds, achieving a speedup of $119.16\times$. Similarly, on the JE dataset, SBCList++ requires $6547$ seconds while \bbtwok in $55.95$ seconds. The primary reason for this improvement lies in the design of the two algorithms. Although SBCList++ correctly counts balanced butterflies, it explicitly enumerates all candidate $(2,2)$-bicliques and verifies their balance for each. This results in a large number of redundant computations, especially in graphs with dense neighborhoods or high-degree vertices. In contrast, \bbtwok separates symmetric and asymmetric wedges into distinct buckets, ensuring that any biclique within a bucket is inherently balanced. This eliminates additional balance checks and reduces the runtime.

However, \bbtwok performs slower on very sparse graphs because its bucket-based wedge-grouping technique becomes less effective when vertex degrees are extremely low. Sparse graphs contain very few wedges and almost no shared neighborhoods, resulting in very small bucket sizes. Consequently, the overhead of building and maintaining the bucket structures dominates the runtime. In contrast, SBCList++ benefits from sparsity, as the number of candidate $(2,2)$-bicliques is very small, making its exhaustive enumeration relatively inexpensive. Thus, although SBCList++ never surpasses \bbtwok, it performs competitively on sparse datasets such as BC, KDD, and AOL due to the reduced enumeration cost and lower constant-time overhead. These findings demonstrate that \bbtwok delivers substantial performance gains over SBCList++, achieving faster execution times across diverse graphs and maintaining competitive performance even on highly sparse datasets.

\begin{table}[!ht]
\small
\renewcommand{\arraystretch}{1.2} 
\caption{Execution times (in seconds) of BB2K and SBCList++.}
\begin{center}
\begin{tabular}{|c|c|c|c|}
\hline
\textbf{Dataset} & \textbf{ $|\balancebutterfly|$} & \textbf{SBCList++} & \textbf{BB2K (speedup)} \\
\hline
\textbf{SE} & $15.32$M & $4.92$ & $0.522$ (\textbf{9.43$\times$}) \\
\hline
\textbf{DBLP} & $0.85$M & $1.61$ & $0.515$ (\textbf{3.13$\times$}) \\
\hline
\textbf{HO} & $280.79$M & $90.83$ & $6.01$ (\textbf{14.96$\times$}) \\
\hline
\textbf{NAP} & $2.24$B & $356.3$ & $2.99$ (\textbf{119.16$\times$}) \\
\hline
\textbf{BC} & $1.11$M & $5.454$ & $3.537$ (\textbf{1.54$\times$}) \\
\hline
\textbf{NIPS} & $3.75$B & $1845.35$ & $172$ (\textbf{10.72$\times$}) \\
\hline
\textbf{LF} & $2.35$B & $2363.125$ & $58.19$ (\textbf{40.61$\times$})\\
\hline
\textbf{MV} & $8.90$B & $3454.13$ & $240.65$ (\textbf{14.35$\times$}) \\
\hline
\textbf{JE} & $136.88$B & $6547.46$ & $56.12$ (\textbf{116.66$\times$})\\
\hline
\textbf{KDD} & $9.20$M & $16$ & $12.89$ (\textbf{1.24$\times$})\\
\hline
\textbf{DG} & $15.06$B & $5320$ & $693.89$ (\textbf{7.66$\times$}) \\
\hline
\textbf{AOL} & $104.47$M & $575$ & $109.89$ (\textbf{5.27$\times$}) \\
\hline
\textbf{EP} & $158.33$B & $11800$ & $600.89$  (\textbf{19.6$\times$})\\
\hline
\textbf{Netflix} & $8.39$T & $>10$ hrs & $>10$ hrs \\
\hline
\textbf{Yahoo} & $5.20$T & $>10$ hrs & $>10$ hrs \\
\hline
\end{tabular}
\end{center}
\label{Tab:BB2k-SBCList++}
\end{table}

  \begin{figure}[!htbp]
    \centering
    \includegraphics[width=8.6cm]{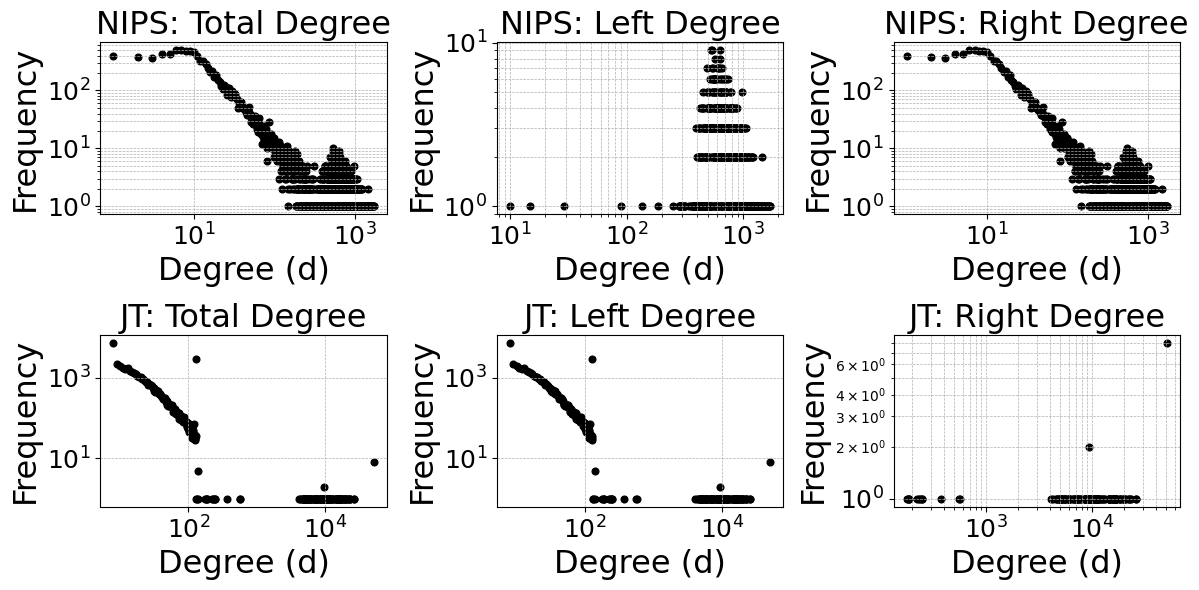}
    \caption{Degree distribution of the NIPS and JE datasets.}
    \label{fig:enter-label1}
\end{figure}

We observe that although the NIPS dataset contains fewer edges than the JE dataset, it incurs a higher runtime in both SBCList++ and \bbtwok. This behavior can be explained by examining the structural properties of the two graphs, particularly the size of their minimum-side partition and the distribution of vertex degree. In the NIPS dataset, the minimum-side partition consists of $1,500$ vertices, whereas in the JE dataset, this partition contains only $150$ vertices. Since both SBCList++ and \bbtwok process vertices on the minimum-side partition as the outer loop of their computation, the cost of the algorithm is proportional to the number of vertices on this side and to the amount of wedge expansion that each vertex induces. Consequently, the tenfold increase in the number of minimum-side vertices in NIPS directly leads to substantially more wedge-generating operations than in JE. Furthermore, the NIPS dataset exhibits a highly skewed and heavy-tailed degree distribution. Several vertices on the minimum side have very high degrees, which significantly increases the number of two-hop paths that must be examined during wedge formation. These high-degree vertices dominate the computational cost due to the quadratic growth in neighborhood intersections. In contrast, the JE dataset has a more uniform degree distribution on its minimum side, resulting in fewer and more evenly distributed wedges. As a result, although JE contains more edges overall, its more favorable structural characteristics lead to faster execution. Figure~\ref{fig:enter-label1} shows the detailed degree distributions of the NIPS and JE datasets.

\subsubsection{Performance of BB2K vs M-BBC} 
 Although $\bbtwok$ is an efficient CPU-based algorithm for counting balanced motifs such as $(2,2)$-bicliques, its computational cost becomes a bottleneck for large-scale datasets. For example, the Netflix and Yahoo datasets contain approximately 100 million and 250 million edges, respectively, and \bbtwok exceed our maximum runtime threshold of ten hours on these inputs. In practice, even well-optimized serial algorithms struggle to meet the latency requirements of real-time or high-throughput applications. By considering that, we go beyond $\bbtwok$ to a multicore design, called M-BBC. M-BBC parallelizes the wedge enumeration for each vertex across multiple CPU cores, considerably accelerating the computation while maintaining the structural correctness of the original algorithm.

\begin{table}[!htbp]
\small
\renewcommand{\arraystretch}{1.2} 
\caption{Execution times (in seconds) of BB2K and M-BBC.}
\begin{center}
\begin{tabular}{|c|c|c|}
\hline
\textbf{Dataset} & \textbf{BB2K} & \textbf{M-BBC (Speedup)} \\
\hline

\textbf{SE}   & 0.52   & 0.01 (\textbf{52$\times$}) \\ \hline
\textbf{DBLP} & 0.51   & 0.02 (\textbf{25.5$\times$}) \\ \hline
\textbf{HO}   & 6.01   & 0.11 (\textbf{54.63$\times$}) \\ \hline
\textbf{NAP}  & 2.99   & 0.14 (\textbf{21.35$\times$}) \\ \hline
\textbf{BC}   & 3.53   & 0.35 (\textbf{9.22$\times$}) \\ \hline

\textbf{NIPS} & 172.00 & 2.42 (\textbf{71.07$\times$}) \\ \hline
\textbf{LF}   & 58.13  & 1.15 (\textbf{50.54$\times$}) \\ \hline
\textbf{MV}   & 240.65 & 3.52 (\textbf{68.36$\times$}) \\ \hline
\textbf{JE}   & 56.12  & 1.35 (\textbf{41.57$\times$}) \\ \hline

\textbf{KDD}  & 12.89  & 2.45 (\textbf{5.26$\times$})  \\ \hline
\textbf{DG}   & 693.89 & 11.05 (\textbf{62.79$\times$}) \\ \hline
\textbf{AOL}  & 109.89 & 7.55 (\textbf{14.55$\times$}) \\ \hline
\textbf{EP}   & 600.00 & 23.71 (\textbf{25.31$\times$}) \\ \hline

\textbf{Netflix} & $>10$ hrs & 2706.00 \\ \hline
\textbf{Yahoo}   & $>10$ hrs & 9303.53 \\ \hline

\end{tabular}
\end{center}
\label{Tab:BB2k-MBBC}
\end{table}


As shown in Table~\ref{Tab:BB2k-MBBC}, M-BBC constantly improves over the serial \bbtwok algorithm throughout all datasets by efficiently leveraging multi-core parallelism. Specifically, on the NIPS dataset, M-BBC achieves a speedup of $71.07\times$ over \bbtwok. In the same way, on the MV dataset, the running time drops from $240.65$ seconds to just $3.52$, obtaining $68.36\times$ speedup. Even on relatively smaller datasets such as SE and DBLP, in which the parallel work is naturally limited due to small vertex sets and fewer wedges, M-BBC gains substantial speedups of $52\times$ and $25.37\times$, respectively. For moderate datasets such as HO, LF, JE, and DG, the multi-core algorithm exhibits strong scalability, achieving speedups from $41\times$ to $62\times$. The running-time gap becomes more noticeable on very large datasets containing tens of millions of edges. For instance, the \bbtwok algorithm exceeds its 10-hour time limit on the Netflix and Yahoo datasets, indicating that the serial approach is not feasible for large-scale graphs. In contrast, M-BBC completes the computation in approximately $45$ minutes on the Netflix dataset and in 2 hours on the Yahoo dataset, exhibiting its robustness and scalability under high-degree heterogeneity and large wedge spaces. One of the key strengths of M-BBC lies in its inherent vertex-level parallelism, as each vertex $u \in (U \cup V)$ in the graph $G=(U, V, E)$ can be processed independently, allowing massive parallelism. This advantage not only simplifies the parallelization but also ensures that our algorithm scales efficiently with the number of processing cores. 
Next, Fig.~\ref{fig:mbbc-multicore} depicts the execution times of M-BBC on different numbers of cores in seconds. We observe that as the number of cores increases, M-BBC consistently achieves lower runtime across all datasets, indicating strong scalability and additionally highlighting the benefit of vertex-level parallelism in effectively utilizing multi-core processors.

\begin{table*}[!ht]
\small
\renewcommand{\arraystretch}{1.2} 
\caption{Execution times (in seconds) of BB2K, M-BBC, and G-BBC++.}
\begin{center}
\begin{tabular}{|c|c|c|c|c|c|}
\hline
\textbf{Dataset} &
\textbf{BB2K} &
\textbf{M-BBC} &
\textbf{G-BBC++} &
\textbf{Speedup (G-BBC++ / BB2K))} &
\textbf{Speedup (G-BBC++ / M-BBC)} \\
\hline

\textbf{SE}      & $0.52$ & $0.01$ & $0.001$ & \textbf{520$\times$} & \textbf{10$\times$} \\ \hline

\textbf{DBLP}    & $0.51 $& $0.02$ & $0.001$ & \textbf{510$\times$} & \textbf{20$\times$} \\ \hline

\textbf{HO }     & $6.01$ & $0.11 $& $0.003$ & \textbf{2003$\times$} & \textbf{36$\times$} \\ \hline

\textbf{NAP }    & $2.99$ & $0.14$ & $0.007$ & \textbf{427$\times$} & \textbf{20$\times$} \\ \hline

\textbf{BC}      & $3.53$ & $0.35$ & $0.029$ & \textbf{121$\times$}  & \textbf{12$\times$} \\ \hline

\textbf{NIPS}    & $172$ & $2.42$ & $0.013$ & \textbf{13230.81$\times$} & \textbf{186$\times$} \\ \hline

\textbf{LF}      & $58.13$  & $1.15$ & $0.014$ & \textbf{4152.70$\times$} & \textbf{82$\times$} \\ \hline

\textbf{MV}      & $240.65$ & $3.52$ & $0.052$ & \textbf{4627$\times$} & \textbf{67$\times$} \\ \hline

\textbf{JE}      & $56.12$  & $1.35$ & $0.035$ & \textbf{1603$\times$} & \textbf{38$\times$} \\ \hline

\textbf{KDD}  &$ 12.89 $ & $2.45$ & $0.049$ & \textbf{263.80$\times$} & \textbf{50$\times$} \\ \hline

\textbf{DG}      & $693.89$ &$ 11.05$ & $0.133$ & \textbf{5217$\times$} & \textbf{83$\times$} \\ \hline

\textbf{AOL}     & $109.89$ & $7.55$ & $1.001$ & \textbf{109$\times$} & \textbf{7$\times$} \\ \hline

\textbf{EP}      & $600$ & $23.71$ & $0.587$ & \textbf{1022.11$\times$} & \textbf{40$\times$} \\ \hline

\textbf{Netflix} & $>10$ hrs & $2706$ & $40$ & -- & \textbf{67$\times$} \\ \hline

\textbf{Yahoo}   & $>10$ hrs & $9303.53$ & $250$ & -- & \textbf{37$\times$} \\ \hline

\end{tabular}
\end{center}
\label{Tab:Serial-Multicore-GPU}
\end{table*}

To enable a fair comparison across datasets with very different absolute runtimes, we present normalized execution times. For each dataset, runtimes are normalized to the lowest-core runtime, designated as the baseline value of $1.0$, with lower normalized values directly reflecting the performance gains from parallelism. Normalized values below $1.0$ indicate performance gains through parallelism. Normalization of that setup is not feasible for the datasets that do not complete on the baseline core count, such as Yahoo. In these instances, we standardize runtimes to the smallest core count at which execution completes and exclude bars for unfeasible setups.



\begin{figure*}[!ht]
    \centering
    \includegraphics[width=\textwidth]{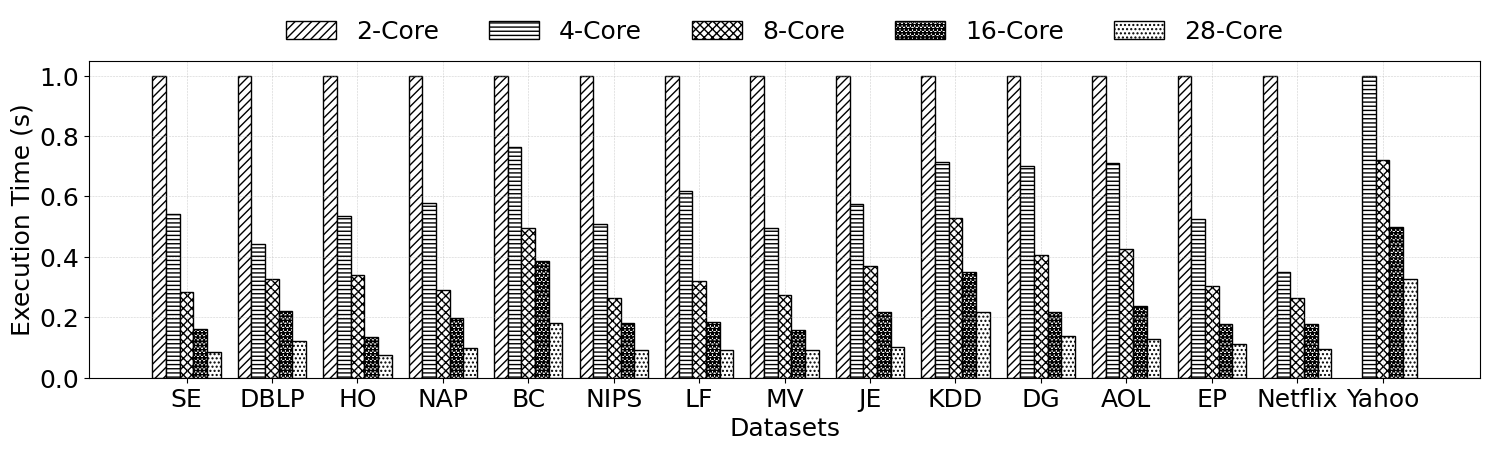}
    \caption{Parallel Speed-ups of M-BBC on Different Cores with Various Datasets.}
    \label{fig:mbbc-multicore}
\end{figure*}

\begin{figure*}[!ht]
    \centering
    \includegraphics[width=\textwidth]{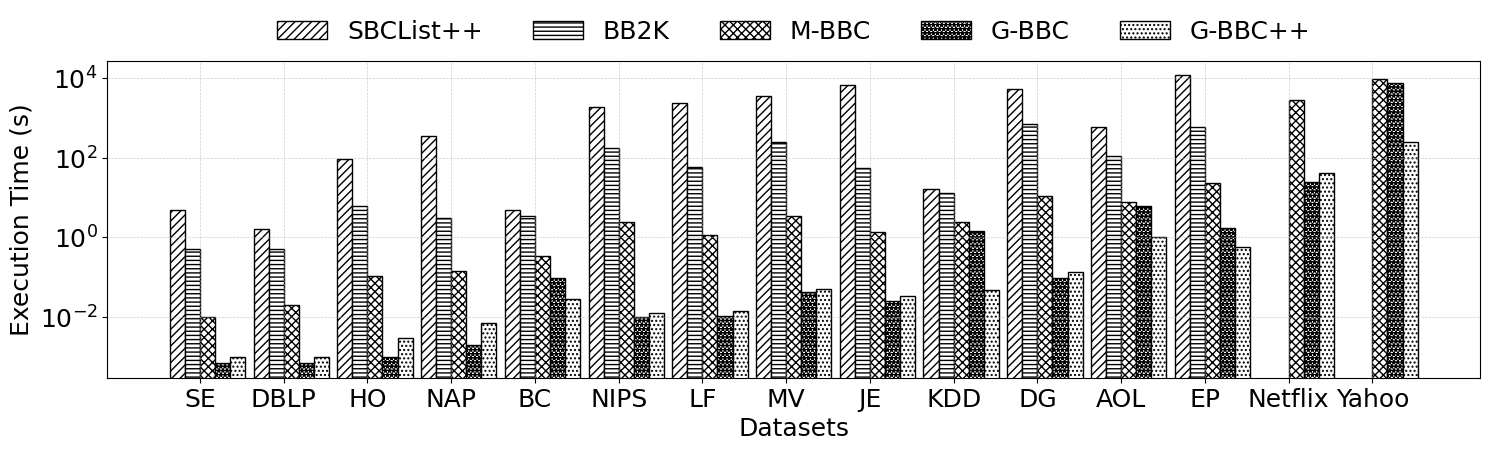}
    \caption{Performance of the proposed algorithm on various datasets.}
    \label{fig:4_alg_comparision bar}
\end{figure*}

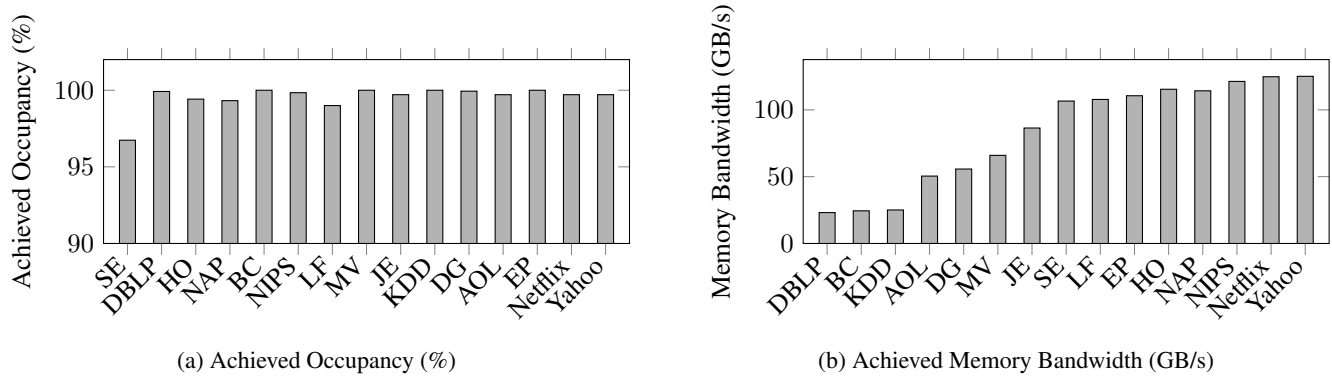
\begin{figure*}[!ht]
\centering

\begin{subfigure}{0.48\linewidth}
\centering
\begin{tikzpicture}
\begin{axis}[
    ybar,
    width=\linewidth,
    height=4.0cm,
    bar width=6pt,
    enlarge x limits=0.05,
    ylabel={Achieved Occupancy (\%)},
    symbolic x coords={ SE, DBLP, HO, NAP,  BC, NIPS, LF, MV, JE,  KDD, DG, AOL, EP, Netflix, Yahoo},
    xtick=data,
    xticklabel style={rotate=45,anchor=east},
    ymin=90, ymax=102
]
\addplot[fill=gray!60] coordinates {
    (SE,96.74)
    (DBLP,99.92)
    (HO,99.42)
    (NAP,99.32)
    (BC,100)
    (NIPS,99.84)
    (LF,99)
    (MV,100)
    (JE,99.71)
    (KDD,100)
    (DG,99.94)
    (AOL,99.71)
    (EP,100)
    (Netflix,99.71)
    (Yahoo,99.71)
};
\end{axis}
\end{tikzpicture}
\caption{Achieved Occupancy (\%)}
\end{subfigure}
\hfill
\begin{subfigure}{0.48\linewidth}
\centering
\begin{tikzpicture}
\begin{axis}[
    ybar,
    width=\linewidth,
    height=4.0cm,
    bar width=6pt,
    enlarge x limits=0.05,
    ylabel={Memory Bandwidth (GB/s)},
    symbolic x coords={ DBLP, BC, KDD, AOL, DG,  MV, JE, SE, LF,   EP,  HO, NAP, NIPS, Netflix, Yahoo},
    xtick=data,
    xticklabel style={rotate=45,anchor=east},
    ymin=0
]
\addplot[fill=gray!60] coordinates {
    (SE,106.60)
    (DBLP,23.12)
    (HO,115.44)
    (NAP,114.27)
    (BC,24.44)
    (NIPS,121.32)
    (LF,107.81)
    (MV,65.95)
    (JE,86.42)
    (KDD,25.08)
    (DG,55.72)
    (AOL,50.44)
    (EP,110.58)
    (Netflix,124.80)
    (Yahoo,125.12)
};
\end{axis}
\end{tikzpicture}
\caption{Achieved Memory Bandwidth (GB/s)}
\end{subfigure}

\caption{GPU Hardware Utilization Across Datasets}
\label{fig:GPU_H/w}
\end{figure*}




\subsubsection{Performance of G-BBC++ vs BB2K and M-BBC }
In the next phase of evaluation, we compare the performance of our GPU-based method, G-BBC++, against both the serial implementation (\bbtwok) and its multicore variant (M-BBC), demonstrating substantial additional acceleration in balanced butterfly counting.

As shown in Table~\ref{Tab:Serial-Multicore-GPU}, in terms of peak performance, G-BBC++ achieves speedups of $13230\times$ over \bbtwok and $186\times$ over M-BBC. On average, G-BBC provides a $2599\times$ improvement relative to \bbtwok and a $50\times$ improvement over M-BBC. The remarkable gains stem from the architectural advantages of GPU computing, dynamic scheduling, and load balancing. Whereas \bbtwok is inherently sequential, and M-BBC is limited to $56$ CPU cores. Fig.~\ref{fig:4_alg_comparision bar} illustrates that our experimental evaluation highlights the substantial performance advantages of our GPU-based algorithm, G-BBC++, over the sequential baseline \(\mathrm{BB2K}\), the multicore implementation \(\mathrm{M\text{-}BBC}\), and the adapted baseline \(\mathrm{SBCList++}\).


\begin{table}[t]
\caption{Structural characteristics of the datasets.}
\label{tab:graph_density_decimal}
\centering
\small
\renewcommand{\arraystretch}{1.2} 
\begin{tabular}{|l|c|c|c|}
\hline
\textbf{Dataset} & $\boldsymbol{n_{\min}}$ & $\boldsymbol{\bar d_{\min}}$ & \textbf{Density} $\boldsymbol{\rho}$ \\
\hline
\textbf{SE }      & 145        & 186.8    & 0.1556 \\ \hline
\textbf{DBLP }    & 1,308      & 22.4     & 0.0037 \\ \hline
\textbf{HO}       & 515        & 222.1    & 0.1732 \\ \hline
\textbf{NAP}      & 1,753      & 151.5    & 0.0059 \\ \hline
\textbf{BC}       & 77,802     & 5.6      &\textbf{ 0.00003} \\ \hline
\textbf{NIPS}     & 1,500      & 497.5    & 0.0402 \\ \hline
\textbf{LF}       & 992        & 905.7    & 0.0052 \\ \hline
\textbf{MV}       & 3,706      & 269.8    & 0.0447 \\ \hline
\textbf{JE}       & 140        & 12,349.6 & 0.2435 \\ \hline
\textbf{KDD}      & 255,170    & 10.8     & \textbf{0.0000059} \\ \hline
\textbf{DG}       & 3,553      & 847.5    & 0.0061 \\ \hline
\textbf{AOL}      & 1,632,788  & 6.6      & \textbf{0.0000014} \\ \hline
\textbf{EP}       & 120,492    & 113.4    & \textbf{0.00015} \\ \hline
\textbf{Netflix}  & 17,770     & 5,654.6  & 0.0118 \\ \hline
\textbf{Yahoo}    & 624,961    & 410.9    & \textbf{0.00041} \\ \hline
\end{tabular}
\end{table}

We also compare the performance of G-BBC with that of G-BBC++. Our experiments show that G-BBC employs a fixed, tile-based strategy that performs well when the number of tiles is limited, and each tile contains sufficient computation. In such cases, the cost of tile construction and synchronization can be effectively reduced using shared memory. However, this approach does not scale well when the smaller side of the partition becomes large, as the number of tiles grows proportionally, introducing substantial overhead. To clearly observe this variation, we assess each dataset using $3$ structural metrics: (1) the smaller size partition ($n_{\min}$), (2) the average degree of the smaller side partition ($\bar d_{\min}$), and (3) the bipartite edge density ($\rho$). The minimal side size is defined as $n_{\min} = \min(|U|, |V|)$, the average degree is obtained as $\bar d_{\min} = |E|/n_{\min}$, and the global edge density with $\rho = |E|/(|U||V|)$, which represents the fraction of all possible edges that are present~\cite{Sariyuce2018}. All these metrics are reported in Table~\ref{tab:graph_density_decimal}.

These structural metrics explain the observed performance differences between G-BBC and G-BBC++. For example, the Netflix dataset contains approximately 100 million edges but has a relatively small minimum-side partition of size $17{,}770$, resulting in a very high average degree on that side. Consequently, each tile performs substantial computation, allowing G-BBC to reduce its overhead and complete the computation faster than G-BBC++. In contrast, datasets such as AOL, EP, and Yahoo exhibit extremely large minimum-side partitions, low average degrees, and very low densities (as shown in Table~\ref{tab:graph_density_decimal}). Although these graphs contain tens to hundreds of millions of edges, the work per tile is limited, and the cumulative tile-processing overhead becomes the dominant cost. As a consequence, G-BBC experiences notable efficiency reduction on these datasets. In contrast, G-BBC++ incorporates a dynamic scheduling and load-balancing strategy that adapts more effectively to large, sparse, and uneven partition structures. This adaptability allows G-BBC++ to avoid excessive processing of underutilized tiles and to gain better performance on datasets with large minimum-side partitions and low density. 

Overall, our extensive experiments show that G-BBC achieves superior performance when the minimum-side partition is relatively small. G-BBC++, however, maintains competitive performance in these cases while also scaling effectively to datasets with large, sparse, or highly imbalanced partitions. Consequently, G-BBC++ offers robust, consistent performance across a wide range of graph characteristics, making it a more versatile, broadly applicable GPU-based solution.

\section{Case Study}
\label{sec: casestudy}

In this section, we present a case study to evaluate the effectiveness of the proposed model. 

Drug combination therapies have become increasingly important for treating complex diseases, particularly when single-drug treatments are ineffective due to redundancy in biological pathways or the emergence of drug resistance. In such contexts, understanding how multiple drugs interact with common(shared) biological targets is crucial. These interactions can be synergistic (reinforcing effects), antagonistic (canceling effects), or compensatory (to mitigate the other). To represent this, we model the system as a signed bipartite graph, where one set of nodes represents drugs, the other set represents biological targets, and edges are labeled as activating ($+$) or inhibitory ($-$) interactions.

     In this case study, we explore the utility of balanced butterflies for capturing coherent and contradictory drug-target relationships. Although our core algorithm is designed for coherent and incoherent butterfly counting, we extend it here to also detect mixed coherent butterflies.` This network contains a total of $69537$ butterfly motifs. Each butterfly is classified into one of these categories based on the sign configuration of its interactions:
 \begin{itemize}
  \item \textbf{Coherent}: Both drugs have the same type of interaction (activation or inhibition) with each shared target.
  \item \textbf{Incoherent}: The drugs have opposite effects on each shared target.
  \item \textbf{Mixed}:  Each drug influences the target with a combination of coherent and incoherent signs.
\end{itemize}

\begin{table*}[htbp]
\centering
\caption{Sample drugs and targets in the signed human drug–target network.}
\label{tab:drug_target_acronyms}
\begin{tabular}{@{}ll l|| ll l@{}}
\toprule
\textbf{Acronym} & \textbf{Drug Name} & \textbf{DrugBank No.} & 
\textbf{Acronym } & \textbf{Target Name} & \textbf{Target ID} \\
\midrule
D1   & Amphetamine & DB00182 & T1 & Synaptic vesicular amine transporter & BE0000118 \\
D2  & 3,4-Methylenedioxymethamphetamine & DB01454 & T2 & Sodium-dependent dopamine transporter & BE0000647 \\
D3   & Dextroamphetamine & DB01576 & T3 & Trace amine-associated receptor 1 & BE0001044 \\
D4   & Tramadol & DB00193 & T4 & Mu-type opioid receptor & BE0000770 \\
D5   & Ziprasidone & DB00246 & T5 & Delta-type opioid receptor & BE0000420 \\
D6   & Paliperidone & DB01267 & T6 & Sodium-dependent noradrenaline transporter & BE0000486 \\
--    & -- & -- & T7 & Histamine H1 receptor & BE0000442 \\
--    & -- & -- & T8 & Alpha-2C adrenergic receptor & BE0000342 \\
\bottomrule
\end{tabular}
\end{table*}

To make these categories more tangible, we extract a small subset of the DH network containing a few representative drugs and targets. The signed interactions in this subset are summarized in Table~\ref{tab:drug_target_acronyms}, where solid edges denote activation and dashed edges denote inhibition, as shown in Fig.\ref{fig:drug-target}. From this subgraph, we identify one example each of a coherent, incoherent, and mixed butterfly, illustrated in Fig.~\ref{fig:fivebutterflies}. Panel~(a) illustrates a scenario where drug actions are coherent at each target, and the cycle is balanced, whereas panel~(b) shows incoherent drug actions, where each drug has the opposite effect on each target, and the cycle is balanced. Panel (c) presents a mixed case, where each drug has the target with a combination of coherent and incoherent signs across the targets. These examples clarify the structural differences between the three butterfly types and demonstrate how such patterns can help identify potential synergistic interactions or contradictions, thereby supporting the design of coherent therapeutic strategies.  While we do not use efficacy data or perform biological validation in this case study, our analysis focuses on how graph-theoretical structure reflects biologically interpretable coherence.

\begin{figure}[htbp]
\centering
\begin{tikzpicture}[
  node distance=1.5cm and 1cm,
  every node/.style={inner sep=0pt}
]

\node (D1) at (0.8,0) {\includegraphics[width=0.8cm]{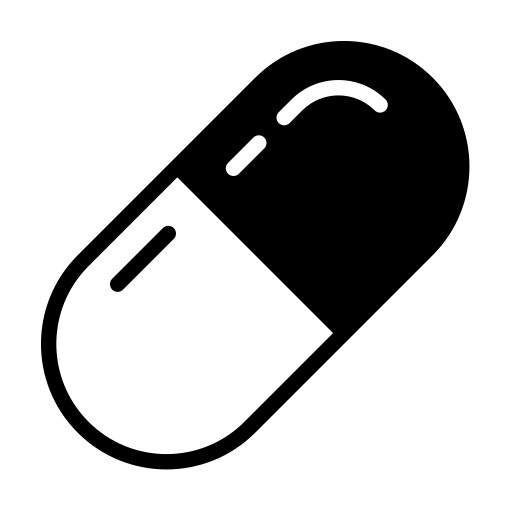}};
\node (D2) at (1.8,0) {\includegraphics[width=0.8cm]{Images/Drug.png}};
\node (D3) at (2.8,0) {\includegraphics[width=0.8cm]{Images/Drug.png}};
\node (D4) at (3.8,0) {\includegraphics[width=0.8cm]{Images/Drug.png}};
\node (D5) at (4.8,0) {\includegraphics[width=0.8cm]{Images/Drug.png}};
\node (D6) at (5.8,0) {\includegraphics[width=0.8cm]{Images/Drug.png}};

\node[above=2pt of D1] {\scriptsize D1};
\node[above=2pt of D2] {\scriptsize D2};
\node[above=2pt of D3] {\scriptsize D3};
\node[above=2pt of D4] {\scriptsize D4};
\node[above=2pt of D5] {\scriptsize D5};
\node[above=2pt of D6] {\scriptsize D6};

\node (T1) at (0,-2) {\includegraphics[width=0.8cm]{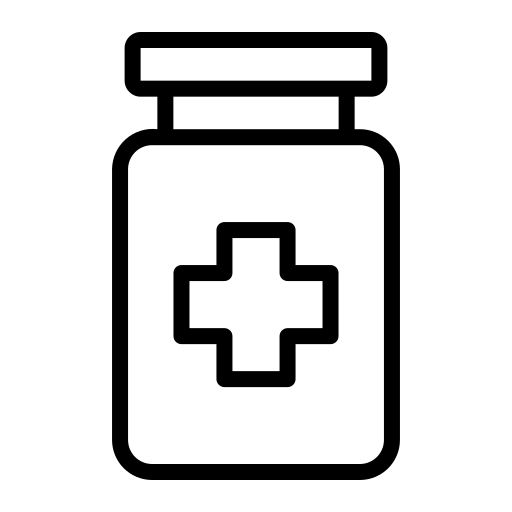}};
\node (T2) at (1.2,-2) {\includegraphics[width=0.8cm]{Images/target.png}};
\node (T3) at (2.2,-2) {\includegraphics[width=0.8cm]{Images/target.png}};
\node (T4) at (3.2,-2) {\includegraphics[width=0.8cm]{Images/target.png}};
\node (T5) at (4.2,-2) {\includegraphics[width=0.8cm]{Images/target.png}};
\node (T6) at (5.2,-2) {\includegraphics[width=0.8cm]{Images/target.png}};
\node (T7) at (6.2,-2) {\includegraphics[width=0.8cm]{Images/target.png}};
\node (T8) at (7.2,-2) {\includegraphics[width=0.8cm]{Images/target.png}};

\node[below=2pt of T1] {\scriptsize T1};
\node[below=2pt of T2] {\scriptsize T2};
\node[below=2pt of T3] {\scriptsize T3};
\node[below=2pt of T4] {\scriptsize T4};
\node[below=2pt of T5] {\scriptsize T5};
\node[below=2pt of T6] {\scriptsize T6};
\node[below=2pt of T7] {\scriptsize T7};
\node[below=2pt of T8] {\scriptsize T8};

\draw[dottedEdge]  (D1.south) -- (T1.north);
\draw[dottedEdge]  (D1.south) -- (T2.north);
\draw[solidEdge] (D1.south) -- (T3.north);
\draw[solidEdge]  (D1.south) -- (T6.north);

\draw[dottedEdge]  (D2.south) -- (T1.north);
\draw[dottedEdge] (D2.south) -- (T2.north);

\draw[dottedEdge]  (D3.south) -- (T2.north);
\draw[solidEdge]  (D3.south) -- (T3.north);
\draw[solidEdge] (D3.south) -- (T4.north);
\draw[solidEdge]  (D3.south) -- (T5.north);
\draw[dottedEdge] (D3.south) -- (T6.north);

\draw[solidEdge]  (D4.south) -- (T4.north);
\draw[solidEdge] (D4.south) -- (T5.north);
\draw[dottedEdge]  (D4.south) -- (T6.north);

\draw[solidEdge]  (D5.south) -- (T7.north);
\draw[solidEdge] (D5.south) -- (T8.north);

\draw[dottedEdge]  (D6.south) -- (T7.north);
\draw[dottedEdge]  (D6.south) -- (T8.north);

\end{tikzpicture}
\caption{Sample signed human drug–target network.}
\label{fig:drug-target}
\end{figure}

\begin{figure}[htbp]
\centering
\begin{tikzpicture}[
    node distance=0.2cm and 0.2cm,
    every node/.style={inner sep=0pt},
    edge/.style={thick},
    scale=0.85, transform shape
]

\begin{scope}[shift={(0,0)}]
    \node (d1a) at (0,0) {\includegraphics[width=0.8cm]{Images/Drug.png}};
    \node[above=1pt of d1a] {\small D3};

    \node (d1b) at (1.6,0) {\includegraphics[width=0.8cm]{Images/Drug.png}};
    \node[above=1pt of d1b] {\small D4};

    \node (t1a) at (0,-1.6) {\includegraphics[width=0.8cm]{Images/target.png}};
    \node[below=1pt of t1a] {\small T4};

    \node (t1b) at (1.6,-1.6) {\includegraphics[width=0.8cm]{Images/target.png}};
    \node[below=1pt of t1b] {\small T5};

    \draw[edge,solid] (d1a.south) -- (t1a.north);
    \draw[edge,solid] (d1a.south) -- (t1b.north);
    \draw[edge,solid] (d1b.south) -- (t1a.north);
    \draw[edge,solid] (d1b.south) -- (t1b.north);
\end{scope}

\begin{scope}[shift={(3,0)}]
    \node (d2a) at (0,0) {\includegraphics[width=0.8cm]{Images/Drug.png}};
    \node[above=1pt of d2a] {\small D3};

    \node (d2b) at (1.6,0) {\includegraphics[width=0.8cm]{Images/Drug.png}};
    \node[above=1pt of d2b] {\small D4};

    \node (t2a) at (0,-1.6) {\includegraphics[width=0.8cm]{Images/target.png}};
    \node[below=1pt of t2a] {\small T5};

    \node (t2b) at (1.6,-1.6) {\includegraphics[width=0.8cm]{Images/target.png}};
    \node[below=1pt of t2b] {\small T6};

    \draw[edge,solid] (d2a.south) -- (t2a.north);
    \draw[edge,solid] (d2b.south) -- (t2a.north);

    \draw[edge,dashed] (d2a.south) -- (t2b.north);
    \draw[edge,dashed] (d2b.south) -- (t2b.north);

    \node at (0.7,-2.9) {\textbf{(\large a)}};
\end{scope}

\begin{scope}[shift={(6,0)}]
    \node (d3a) at (0,0) {\includegraphics[width=0.8cm]{Images/Drug.png}};
    \node[above=1pt of d3a] {\small D1};

    \node (d3b) at (1.6,0) {\includegraphics[width=0.8cm]{Images/Drug.png}};
    \node[above=1pt of d3b] {\small D2};

    \node (t3a) at (0,-1.6) {\includegraphics[width=0.8cm]{Images/target.png}};
    \node[below=1pt of t3a] {\small T1};

    \node (t3b) at (1.6,-1.6) {\includegraphics[width=0.8cm]{Images/target.png}};
    \node[below=1pt of t3b] {\small T2};

    \draw[edge,dashed] (d3a.south) -- (t3a.north);
    \draw[edge,dashed] (d3a.south) -- (t3b.north);
    \draw[edge,dashed] (d3b.south) -- (t3a.north);
    \draw[edge,dashed] (d3b.south) -- (t3b.north);
\end{scope}

\begin{scope}[shift={(1,-4.0)}]
    \node (d4a) at (0,0) {\includegraphics[width=0.8cm]{Images/Drug.png}};
    \node[above=1pt of d4a] {\small D5};

    \node (d4b) at (1.6,0) {\includegraphics[width=0.8cm]{Images/Drug.png}};
    \node[above=1pt of d4b] {\small D6};

    \node (t4a) at (0,-1.6) {\includegraphics[width=0.8cm]{Images/target.png}};
    \node[below=1pt of t4a] {\small T7};

    \node (t4b) at (1.6,-1.6) {\includegraphics[width=0.8cm]{Images/target.png}};
    \node[below=1pt of t4b] {\small T8};

    \draw[edge,solid] (d4a.south) -- (t4a.north);
    \draw[edge,solid] (d4a.south) -- (t4b.north);
    \draw[edge,dashed] (d4b.south) -- (t4a.north);
    \draw[edge,dashed] (d4b.south) -- (t4b.north);

    \node at (0.7,-2.9) {\textbf{(\large b)}};
\end{scope}

\begin{scope}[shift={(5,-4.0)}]
    \node (d5a) at (0,0) {\includegraphics[width=0.8cm]{Images/Drug.png}};
    \node[above=1pt of d5a] {\small D1};

    \node (d5b) at (1.6,0) {\includegraphics[width=0.8cm]{Images/Drug.png}};
    \node[above=1pt of d5b] {\small D3};

    \node (t5a) at (0,-1.6) {\includegraphics[width=0.8cm]{Images/target.png}};
    \node[below=1pt of t5a] {\small T3};

    \node (t5b) at (1.6,-1.6) {\includegraphics[width=0.8cm]{Images/target.png}};
    \node[below=1pt of t5b] {\small T6};

    \draw[edge,solid] (d5a.south) -- (t5a.north);
    \draw[edge,solid] (d5a.south) -- (t5b.north);
    \draw[edge,solid] (d5b.south) -- (t5a.north);
    \draw[edge,dashed] (d5b.south) -- (t5b.north);

    \node at (0.7,-2.9) {\textbf{(\large c)}};
\end{scope}

\end{tikzpicture}
\caption{Panel (a) coherent, (b) incoherent, and (c) mixed butterflies in the signed human drug-target network.}
\label{fig:fivebutterflies}
\end{figure}

Applying our method to the full DH network yields the distribution shown in Table~\ref{Tab: casestudy1}, approximately $81\%$ all butterflies are coherent, suggesting a strong relationship for drug pairs to act consistently on shared targets. Incoherent butterflies make up $13.5\%$ mixed motifs, which represent conflicting and potentially unstable interactions, and are the least frequent at $5.5\%$. Importantly, coherent and incoherent butterflies are typically positive cycles (i.e., contain an even number of negative edges), while mixed butterflies are often negative cycles. This observation aligns with biological intuition that structurally balanced cycles may imply stable or reinforcing drug-target interactions, while unbalanced (negative) motifs may reflect contradictory or unstable effects.


\begin{table}[!htbp]
\small
\caption{
Coherent, incoherent, and mixed butterflies.}
\begin{center}
\begin{tabular}{|c|c|c|}
\hline
\textbf{Type of butterflies} &\textbf{Count} \\
\hline
coherent,($++ / ++$)     & $25158$ \\ \hline
coherent,($++ / --$)     & $3042$ \\ \hline
coherent,($-- / --$)     & $28166$   \\ \hline
incoherent,$(+-/+-)$     & $9375$  \\ \hline \hline
mixed,$(++/+-)$     & $1138$  \\ \hline
mixed,$(+-/--)$     & $2658$  \\ \hline

\end{tabular}
\end{center}
\label{Tab: casestudy1}
\end{table}


\section{Concluding Remarks}
\label{Sec:Con}
In this work, we studied the problem of balanced butterfly counting in signed bipartite graphs and presented the first highly parallel solutions for both multi-core CPUs and GPUs. The proposed multi-core algorithm, M-BBC, utilizes fine-grained vertex-level parallelism to accelerate wedge-based aggregation while preventing the formation of unbalanced substructures. To further improve scalability, we also proposed two GPU-based algorithms, G-BBC and G-BBC++, which utilize tile-based shared-memory processing and dynamic workload scheduling to effectively exploit the massive parallel processing capabilities of modern GPUs.  We evaluated our methods on $15$ real-world bipartite datasets, and the experimental results clearly demonstrate their efficiency and scalability. Compared to the sequential baseline, M-BBC achieves speedups of up to $71.13\times$, with an average improvement of $38.13\times$. The GPU implementation provides even greater acceleration. G-BBC and G-BBC++ achieve speedups of up to $13{,}320\times$ on average $2{,}600\times$. These results confirm that our parallel designs are highly effective and well-suited for analyzing large-scale signed bipartite graphs.


%

\section{Future Directions}
\label{Sec:fut}

In this work, we focus on balanced butterfly counting, i.e., balanced (2,2)-bicliques, as they form the simplest and most fundamental signed motif, allowing efficient processing by always considering the smaller side of the bipartite graph. Although the same ideas naturally extend to \textbf{balanced $(2,k)$-bicliques}, doing so requires fixing one side of the graph during counting. When this fixed side is the larger partition, the computational cost increases noticeably, posing practical challenges to scalability and load balancing, even though the core algorithmic ideas remain unchanged. Exploring efficient ways to handle this setting is an interesting direction for future work. More broadly, we plan to generalize the proposed techniques to balanced $(p,q)$-bicliques, allowing arbitrary values of $p$ and $q$, thereby enabling the discovery of richer, higher-order motifs in signed bipartite networks, which helps to do fraud detection, anomaly detection, and trust modeling in social networks. Another possible direction for future work is to extend this research to support dynamic signed bipartite graphs, in which edges can be added, removed, or their signs updated over time. Extending the work to these dynamic settings would allow deeper insights as the underlying networks change.

\EOD

\end{document}